\documentclass{amsart}

\usepackage{hyperref,amsmath,amssymb,xy,mathrsfs,amsthm,amsxtra,graphicx,verbatim}
\hypersetup{nesting=true,debug=true,naturalnames=true}
\usepackage[mathcal]{euscript}

\newcommand{\Z}{\mathbf{Z}}
\newcommand{\R}{\operatorname{\mathbf{R}}}
\newcommand{\Proj}{\operatorname{Proj}}
\newcommand{\sgn}{\mathsf{sgn}}

\theoremstyle{plain}
\newtheorem{thm}[equation]{Theorem}

\newtheorem{lem}[equation]{Lemma}

\newtheorem{prop}[equation]{Proposition}

\theoremstyle{remark}
\newtheorem{rmk}[equation]{Remark}

\newtheorem{exm}[equation]{Example}

\begin{document}

\title{The Gibbs phenomenon for the Krawtchouk polynomials}

\author{John Cullinan}
\address{Department of Mathematics, Bard College, Annandale-On-Hudson, NY 12504, USA}
\email{cullinan@bard.edu}
\urladdr{\url{http://faculty.bard.edu/cullinan/}}

\author{Elisabeth Young}
\address{Department of Mathematics, Bard College, Annandale-On-Hudson, NY 12504, USA}
\email{ey0200@bard.edu}

\begin{abstract}
We study the Fourier approximation $\mathcal{F}_N$ of the sign function by the Krawtchouk polynomials.  We give numerical evidence that the Gibbs phenomenon of the approximation differs from the classical Gibbs constant; this is in contrast to other families of orthogonal polynomials.  We also show that the steepness $\mathcal{F}_N'(0)$ of the approximation is bounded by explicitly proving  $\lim_{N \to \infty}  \mathcal{F}_N'(0) = \log 4$.  This is also in contrast to approximations by classical orthogonal polynomials, where the steepness has been shown to be unbounded as the degree increases.  
\end{abstract}

\keywords{orthogonal polynomials, Gibbs phenomenon}

\maketitle

\section{Introduction} \label{introduction}

\subsection{Motivation}

Let $I \subseteq \R$ be a real interval and $\lbrace p_n \rbrace_{n=0}^\infty$ a set of continuous functions on $I$ that are orthogonal with respect to an inner product $\langle~,~\rangle$.  If $f:I \to \R$ is a real-valued function, then the $N$th Fourier approximation $\mathcal{F}_N$ of $f$ is defined pointwise on $I$ by
\[
\mathcal{F}_N(x) = \sum_{n=0}^N \Proj_{p_n}(f)\, p_n(x),
\]
where $\Proj$ denotes orthogonal projection using $\langle ~,~ \rangle$.  The classical Gibbs phenomenon occurs in the approximation of discontinuous functions by trigonometric Fourier series.  We immediately consider a special case that is illustrative of this project.  Let $a$ be a positive real number and let $I = [-a,a]$.  Let $f = \sgn:I \to \R$ be the sign function.

\begin{exm} \label{trig_gibbs}
If $a = \pi$ and $p_n(x) = \sin(nx)$, then 
\[
\mathcal{F}_N(x) = \frac{4}{\pi} \sum_{n=0}^N \frac{1}{2n+1} \sin((2n+1)x).
\]
\end{exm}

\begin{exm} \label{legendre_gibbs}
If $a=1$ and $T_n$ is the $n$th Chebyshev polynomial, then by \cite[Equation (3.10)]{fay},
\[
\mathcal{F}_N(x) = \frac{4}{\pi} \sum_{n=0}^N \frac{(-1)^{n+1}}{2n-1}T_{2n-1}(x).
\]
Similarly, as shown in \cite[Equation (6.14)]{fay}, if $p_n$ is the $n$th Legendre polynomial then
\[
\mathcal{F}_N(x) = \sum_{n=1}^N \left(2-\frac{1}{2n} \right)p_{2n-2}(0)p_{2n-1}(x).
\]
\end{exm}

In the first example it is a standard exercise to show that as $N \to \infty$ and $x \to 0^+$, $\mathcal{F}_N(x)$ ``overshoots'' the value 1 by a constant $\gamma$, given explicitly by
\[
\gamma = \frac{2}{\pi} \int_0^\pi \frac{\sin t}{t} \,{\rm d}t \approx 1.179.
\]
One can ask whether or not there exists a similar Gibbs constant for the series of Example \ref{legendre_gibbs}.  In fact, the articles \cite{cs}, \cite{dh}, and \cite{kaber} show that if we approximate $\sgn$ (or a suitable variant if $I = \R$) by the Chebyshev, Gegenbauer, Hermite, Jacobi, Laguerre, or Legendre polynomials, then there is a Gibbs constant associated to each of these series and it is exactly $\gamma$. In each of the cases listed above, including the trigonometric case, it is also the case that the \emph{steepness} $\mathcal{F}_N'(0)$ of the approximation at 0 is unbounded as $N \to \infty$ (in \cite[\S3.2]{kaber} the author explicitly shows how the steepness varies with the parameter in the one-parameter family of Gegenbauer polynomials).

A common feature of each of the aforementioned Fourier series is that $\mathcal{F}'_N(x)$ can be summed by the Christoffel-Darboux identity \cite[Thm.~3.2.2]{szego}.  For example if $H_n(x)$ denotes the $n$th Hermite polynomial (defined on the real line $\R$), then 
\begin{align*}
\mathcal{F}_{2N+1}(x) = \sum_{n=0}^N \left( \frac{1}{2n+1} \right) \, \frac{H_{2n}(0) H_{2n+1}(x)}{\|H_{2n}\|^2}. 
\end{align*}
Using the derivative rule $H_n'(x) = 2nH_{n-1}(x)$, we have 
\begin{align*}
\mathcal{F}_{2N+1}'(x) &= 2 \sum_{n=0}^N  \frac{H_{2n}(0) H_{2n}(x)}{\|H_{2n}\|^2} \\
&= \frac{(-1)^N}{2^{2N}N!\sqrt{\pi}} \frac{H_{2N+1}(x)}{x},
\end{align*}
where we have simplified the sum using the Christoffel-Darboux formula along with the known values of $H_{2n}(0)$; see \cite[\S 3]{cs} for a full derivation.  This last expression is useful from two points of view.  First, using known estimates on the zeroes of the Hermite polynomials \cite[(6.31.20), (6.31.23)]{szego}, one can approximate the smallest critical point $\theta_N$ of $H_{2N+1}(x)$ as 
\[
\frac{\pi}{\sqrt{2N+1}} < \theta_N < \frac{\pi}{\sqrt{2N+1}} \underbrace{\left\{ \frac{1}{2} + \frac{1}{2} \left[1 - \left( \frac{2\pi}{2N+1} \right)^2 \right]^{1/2}\right\}^{-1/2}}_{\to 1^+ \text{ as } N \to \infty}. 
\]
Second, by the fundamental theorem of calculus, we can write
\[
\mathcal{F}_N(\theta_N) = \frac{(-1)^N}{2^{2N}N!\sqrt{\pi}} \int_0^{\theta_N}  \frac{H_{2N+1}(x)}{x}\,{\rm d}x.
\]
It can then be shown, using standard approximations to the Hermite polynomials, that as $N \to \infty$ this value approaches $\gamma$. 

This raises the question of whether this Gibbs constant is ``universal'' in all continuous approximations of $\sgn$. In this paper we show that the {Krawtchouk polynomials} exhibit a Gibbs phenomenon, but the actual constant reflecting the overshoot is different. We also show that the steepness of the approximation is bounded as $N \to \infty$.

\subsection{Krawtchouk Polynomials} Fix a positive integer $N$ and let $p$ and $q$ be positive real numbers with $p+q = 1$.  On the interval $[0,N]$ define the set of \emph{Krawtchouk polynomials} $\lbrace K_n^{(p)} \rbrace_{n=0}^N$ by 
\[
K_n^{(p)}(x;N)	= \sum_{v=0}^n(-1)^{n-v}\binom{N-x}{n-v}\binom{x}{v}p^{n-v}q^v.
\]
Define the weighting function $w_N(x) = \binom{N}{x}p^xq^{N-x}$.  Then the $K_n^{(p)}(x;N)$ are orthogonal on $[0,N]$ with respect to the discrete inner-product 
\[
\langle f,g\rangle = \sum_{y=0}^N f(y)g(y) w_N(y).
\]
We now immediately specialize to case that we will study in this paper.

From now on we take $N$ to be even and set $p=1/2$; see Section \ref{other_p_section} for the case of $p \ne 1/2$.  We consider the shifted polynomials
\begin{align} \label{Krawtchouk_def}
k_n(x;N) := K_n^{(1/2)}(x+N/2;N) = \frac{1}{2^n} \sum_{v=0}^n(-1)^{n-v} \binom{N/2-x}{n-v}\binom{N/2+x}{v}
\end{align}
on the interval $I_N = [-N/2,N/2]$ with respect to the shifted inner product.  We study the Fourier expansion of $\sgn$ by the $k_n(x;N)$. 

A common feature of the classical orthogonal polynomials is that they satisfy a Sturm-Liouville differential equation (see \cite[\S 8.61]{szego}), whereas the Krawtchouk polynomials -- which are orthogonal on a discrete set -- do not.  It is precisely this differential equation which allows one to write the Fourier approximation in a form whose derivative can be summed by the Christoffel-Darboux formula.  The discrete nature of the Krawtchouk polynomials suggests that our methods will be based on combinatorics, not calculus.  Indeed, we will derive, by way of the Super Catalan Numbers, that
\[
\mathcal{F}_N(x) = \sum_{n=0}^{N-2} \frac{k_n(0;N)k_{n+1}(x;N)}{\|k_n(-;N)\|^2},
\]
which is is very similar to the form of the approximation for the classical polynomials.  See Section \ref{gibbs} for further discussion of Fourier expansions in other bases.

\subsection{Main Result}  Recall that the steepness of the Fourier approximation to $\sgn$  is $\mathcal{F}_N'(0)$ and that for the classical orthogonal polynomials, the steepness is unbounded as $N \to \infty$.  Our main result is that this is not the case for the Krawtchouk polynomials.

\begin{thm} \label{main_steepness_thm}
With all notation as above, we have
\[
\lim_{N\to \infty} \mathcal{F}_N'(0) = \log 4.
\]
\end{thm}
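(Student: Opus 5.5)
The plan is to avoid the Christoffel--Darboux route and use the fact that the Krawtchouk polynomials are orthogonal on a finite set. With $M=N/2$, the inner product only sees the $N+1$ nodes $y\in\{-M,\dots,M\}$. Since $k_0,\dots,k_N$ are orthogonal and nonzero there, they form a basis of all functions on these nodes. Hence the full projection $\sum_{n=0}^{N}\Proj_{k_n}(\sgn)\,k_n(x)$ is the unique polynomial of degree $\le N$ agreeing with $\sgn$ at every node (with $\sgn(0)=0$), i.e. the Lagrange interpolant of $\sgn$.

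\textbf{Step 1: $\mathcal{F}_N$ is the Lagrange interpolant.} From \eqref{Krawtchouk_def} one reads off $k_n(-x;N)=(-1)^nk_n(x;N)$, by swapping $N/2-x$ and $N/2+x$ and reindexing $v\mapsto n-v$. Since $\sgn$ is odd, $\langle \sgn,k_n\rangle=0$ for all even $n$, in particular for $n=N$ because $N$ is even. So the full projection sum only involves odd $n\le N-1$. This is exactly the range of the displayed expression for $\mathcal{F}_N$, namely indices $n+1$ odd with $n+1\le N-1$. I will check that those coefficients agree with the projection coefficients. Alternatively, I can take $\mathcal{F}_N$ to be the projection by definition, in which case the Super Catalan formula is not needed. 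Either way,
\[
\mathcal{F}_N(x)=\sum_{y\ne 0}\sgn(y)\,\ell_y(x),\qquad \ell_y(x)=\prod_{z\ne y}\frac{x-z}{y-z},
\]
with $z$ ranging over $\{-M,\dots,M\}$.

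\textbf{Step 2: compute $\ell_y'(0)$.} For $y\neq 0$ the factor $x-0$ appears in $\ell_y$, so $\ell_y'(0)=\prod_{z\ne 0,y}(-z)\big/\prod_{z\ne y}(y-z)$. The numerator is $\pm (M!)^2/y$. The denominator is $\pm (M+y)!\,(M-y)!$. Tracking signs (which I will verify carefully, together with the oddness $\ell_{-y}'(0)=-\ell_y'(0)$) should give
\[
\mathcal{F}_N'(0)=2\sum_{y=1}^{M}\frac{(-1)^{y+1}}{y}\,r_y,\qquad r_y=\frac{\binom{2M}{M+y}}{\binom{2M}{M}}.
\]
As a sanity check, for $M=1$ this gives $2\cdot\tfrac12=1$, which is the slope of the interpolant of $(-1,-1),(0,0),(1,1)$.

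\textbf{Step 3: the limit.} For fixed $y$ we have $r_y=\prod_{j=1}^{y}\frac{M-j+1}{M+j}\to 1$ as $M\to\infty$. Moreover $0<r_y\le 1$, and $r_y$ is decreasing in $y$. Hence $c_y=r_y/y$ is positive and decreasing, so the sum is an alternating series with decreasing terms. Its tail beyond any index $Y$ is therefore bounded in absolute value by $c_{Y+1}\le 1/(Y+1)$, uniformly in $M$. Now fix $Y$, let $M\to\infty$ in the first $Y$ terms, and then let $Y\to\infty$. This gives
\[
\mathcal{F}_N'(0)\to 2\sum_{y\ge1}\frac{(-1)^{y+1}}{y}=2\log 2=\log 4.
\]

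The main obstacle I expect is Step 1, making airtight that $\mathcal{F}_N$ as written in the paper is the full projection, hence the interpolant. This requires checking that no odd index is dropped and that the stated coefficients match $\langle\sgn,k_{n+1}\rangle/\|k_{n+1}\|^2$. The secondary risk is a sign slip in Step 2, which I will guard against with the $M=1$ and $M=2$ cases. If Step 1 causes trouble, the fallback is the generating function $\sum_n k_n(x)t^n=(1-t/2)^{M-x}(1+t/2)^{M+x}$. Differentiating at $x=0$ gives $(1-t^2/4)^{M}\log\frac{1+t/2}{1-t/2}$, from which the $k_{2m+1}'(0)$ are explicit and the sum can be evaluated directly.
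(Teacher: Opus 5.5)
Your proof is correct, and it follows a genuinely different and much shorter route than the paper's.

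\textbf{Your route.} Step 1 is the content of the paper's Proposition \ref{interp_prop}. The paper defines $\mathcal{F}_N$ as the full projection over $0\le n\le N$, so you do not need to reconcile anything with the Super Catalan formula of Proposition \ref{fourier_approx}. The signs in Step 2 come out as you predicted. For every node $y\neq 0$ one has $\ell_y'(0)=\frac{(-1)^{y+1}}{y}\cdot\frac{(M!)^2}{(M+y)!\,(M-y)!}$, and this expression makes $\sgn(y)\,\ell_y'(0)$ even in $y$. Hence $\mathcal{F}_N'(0)=2\sum_{y=1}^{M}\frac{(-1)^{y+1}}{y}r_y$ with your $r_y$. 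This gives $1,\ 7/6,\ 37/30$ for $M=1,2,3$, which agrees with the paper's $T(M)$. Step 3 is also sound. The tail bound $1/(Y+1)$ holds uniformly in $M$ because $r_y$ is decreasing in $y$, and together with the termwise limit $r_y\to 1$ this gives the limit $2\log 2=\log 4$.

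\textbf{The paper's route.} The paper never uses interpolation for the steepness. It expands $\mathcal{F}_N$ in the Krawtchouk basis via the Super Catalan identity and computes $k_m'(0;N)$ from the generating function. This produces the double sum $S(M)$, and a fairly delicate induction (Theorem \ref{induction_theorem}) shows $S(M)=2\sum_{k=1}^{M}\frac{1}{M+k}$.

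\textbf{What each buys.} The paper's route gives this exact closed form for every $N$. From it one sees, for example, that the steepness increases monotonically to $\log 4$. Compared with your formula, it also yields the identity $\sum_{y=1}^{M}\frac{(-1)^{y+1}}{y}\binom{2M}{M+y}/\binom{2M}{M}=\sum_{k=1}^{M}\frac{1}{M+k}$. Your route buys brevity and transparency: it needs no Super Catalan numbers, no generating functions and no induction. Because it uses only the interpolation property, it also makes the $p$-independence of the steepness evident, rather than importing it from Section \ref{other_p_section}.
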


In Figure 1 we plot the degree-39 Krawtchouk approximation of  $\sgn$ (in green) versus the Fourier sine approximation (with maximum frequency 79).  The slope at the origin is 
\[
\mathcal{F}_{40}'(0) = \frac{3637485804655193}{2671465728531600} \approx 1.3616,
\]
while $\log 4 \approx 1.3862$.

\begin{figure}[h]
\includegraphics[height=5cm]{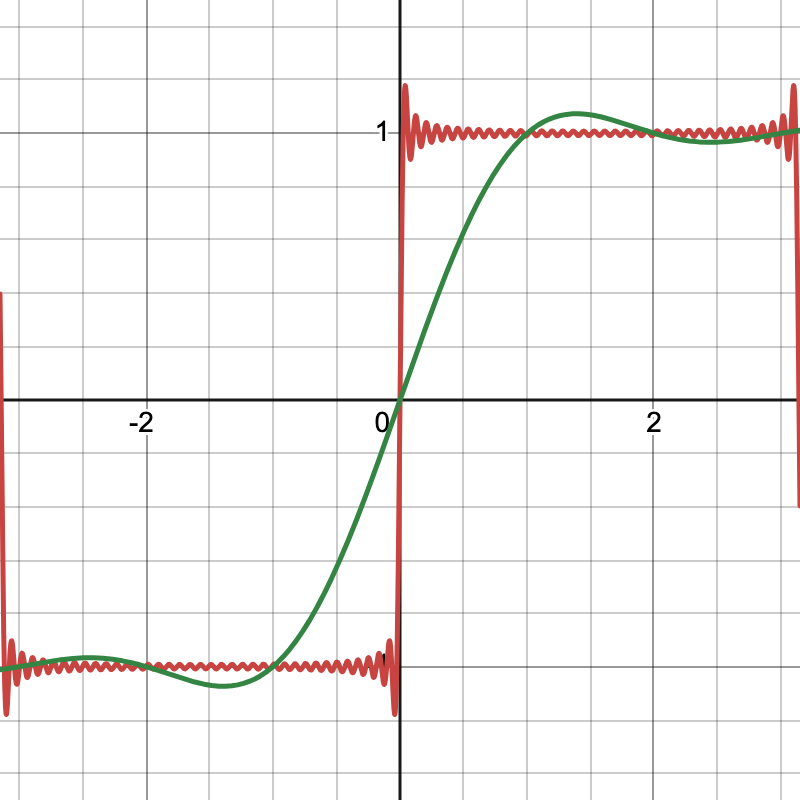} \label{picture}
\caption{Krawtchouk-vs-Trigonometric approximation}
\end{figure}

Computer calculations suggest that the overshoot remains bounded as well, \emph{i.e.},~that there is a Gibbs phenomenon associated to the Krawtchouk polynomials, but we are unable to supply a rigorous proof.  We explain the difficulty in proving such a result in Section \ref{gibbs}.  Instead, we give some numerical data here to support our assertion.  Let $\theta_N$ be the smallest positive critical point of $\mathcal{F}_{N}(x)$.  Then we find
\begin{center}
\begin{tabular}{rr}
$N$ & $\mathcal{F}_N(\theta_N)$ \\
\hline
10 & $\approx$ 1.101182 \\
50 & $\approx$ 1.071891 \\
100 & $\approx$ 1.068784 \\
200 & $\approx$ 1.067271 \\
400 & $\approx$ 1.066524 \\
500 & $\approx$ 1.066375 \\
600 & $\approx$ 1.066277
\end{tabular}
\end{center}

\subsection{Overview of the Paper} In the next section we prove that it is sufficient to study the Fourier approximations when $p=1/2$ by showing that $\mathcal{F}_N^{(p)}(x) = \mathcal{F}_N(x)$ for all $0 < p < 1$. Then we derive the  Krawtchouk Fourier  approximation to $\sgn$ and  show how it differs significantly from approximations in other orthogonal bases.  It is for this reason that we are unable to determine a closed form for $\lim_{N \to \infty} \mathcal{F}_N(\theta_N)$.  Then in Section \ref{steepness_section} we prove Theorem \ref{main_steepness_thm}.   All computer calculations were done using \texttt{Pari/GP} \cite{pari} using numerical precision to 500 digits.  For example, the approximation $\mathcal{F}_{600}(\theta_{600}) \approx 1.066277$ took 89 minutes on a standard laptop. We used the Desmos graphing calculator \cite{desmos} to produce Figure 1.

\subsection{Acknowledgment} The second author is grateful to the Bard College AIMS$^+$ program, in which this project was undertaken.

\section{Polynomial Interpolation and Independence of $p$}\label{other_p_section}

The function $\sgn: [-N/2,N/2] \to \R$ equips us with a set $\mathcal{P}_N$ of points in the plane:
\begin{align} \label{points}
\mathcal{P}_N &= \lbrace (-N/2,-1),\dots, (-1,-1), (0,0), (1,1), \dots, (N/2,1) \rbrace \nonumber \\&=  \lbrace (i-N/2,\sgn(i-N/2)) \rbrace_{i=0}^N.
\end{align}
Fix $p \in (0,1)$. The goal of this section is to prove that $\mathcal{F}_N^{(p)}(x)$ is independent of $p$.  We will do this by showing that $\mathcal{F}_N^{(p)}(x)$ is the (unique) interpolating polynomial of degree $N-1$ on the $N+1$ points of $\mathcal{P}_N$. Hence our convention to restrict our study to $p=1/2$. 

\begin{prop} \label{interp_prop}
With all notation as above, $\mathcal{F}_N^{(p)}(x)$ is a polynomial of degree $N-1$ that interpolates the points of $\mathcal{P}_N$.
\end{prop}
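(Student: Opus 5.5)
The plan is to use the fact that the Krawtchouk polynomials $K_0^{(p)},\dots,K_N^{(p)}$ (shifted to $I_N$) are orthogonal for a \emph{discrete} inner product supported on exactly $N+1$ points. Put $X_N=\{-N/2,\dots,N/2\}$. Each $K_n^{(p)}$ has exact degree $n$, and the weights $w_N(y)=\binom{N}{y}p^yq^{N-y}$ are all strictly positive. Hence $K_0^{(p)},\dots,K_N^{(p)}$, restricted to $X_N$, form an orthogonal basis of the $(N+1)$-dimensional space of real functions on $X_N$.

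It follows that the full expansion $\sum_{n=0}^N \Proj_{K_n}(\sgn)K_n^{(p)}$ agrees with $\sgn$ at every point of $X_N$. This is just the statement that an orthogonal basis reproduces any vector from its projections.

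Next we show that the top coefficient vanishes, so that the sum up to $N-1$ already interpolates. Write $g(x)=\prod_{y\in X_N}(x-y)$, a polynomial of degree $N+1$ vanishing on $X_N$. On $X_N$ the polynomial $K_N^{(p)}$ coincides with a degree-$N$ polynomial orthogonal to every polynomial of degree $\le N-1$. The claim is that $\langle \sgn, K_N^{(p)}\rangle=0$.

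For this, recall that the unique polynomial of degree $\le N$ through the points of $\mathcal{P}_N$ is in fact odd. Indeed, $\mathcal{P}_N$ is symmetric under $(x,y)\mapsto(-x,-y)$, so by uniqueness of the interpolant $P$ we have $P(x)=-P(-x)$. Since $N$ is even, an odd polynomial of degree $\le N$ has degree $\le N-1$.

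Now $\sum_{n\le N}\Proj_{K_n}(\sgn)K_n^{(p)}$ is a polynomial of degree $\le N$ agreeing with $\sgn$ on $X_N$, so it equals $P$. Because $\deg P\le N-1$, its $K_N^{(p)}$-coefficient must be $0$. So the sum truncated at $N-1$ equals $P$ and interpolates $\mathcal{P}_N$.

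It remains to check that the degree is exactly $N-1$, i.e.\ that the $x^{N-1}$ coefficient of $P$ is nonzero. That coefficient is the divided difference $\sum_{y\in X_N}\sgn(y)\big/\prod_{z\ne y}(y-z)$. The factor $1/\prod_{z\neq y}(y-z)$ is $(-1)^{N/2-y}/\big((N/2+y)!\,(N/2-y)!\big)$. Pairing $y$ with $-y$, both terms contribute with the same sign pattern, giving
\[
\frac{2}{N!}\sum_{y=1}^{N/2}(-1)^{N/2-y}\binom{N}{N/2+y}.
\]
An alternating sum of binomial coefficients of this kind has a closed form, namely $\binom{N-1}{N/2}$ up to sign, which is nonzero.

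The main obstacle is this last nonvanishing. I would first try the standard identity $\sum_{j\le m}(-1)^j\binom{N}{j}=(-1)^m\binom{N-1}{m}$. If bookkeeping issues arise, the fallback is to observe that $P$ changes sign enough times: $P(x)-1$ has roots at $1,\dots,N/2$, and Rolle-type counting forces $\deg P\ge N-1$.
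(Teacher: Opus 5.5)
Everything up to $\deg P\le N-1$ is correct, but your exact-degree step fails as written. The sum $\sum_{y\in X_N}\sgn(y)/\prod_{z\ne y}(y-z)$ is the $N$th divided difference over all $N+1$ nodes. That is the coefficient of $x^{N}$ in $P$, not of $x^{N-1}$, and it equals $0$. The factor $(-1)^{N/2-y}/\bigl((N/2+y)!\,(N/2-y)!\bigr)$ is unchanged under $y\mapsto -y$, since $(-1)^{2y}=1$, while $\sgn(y)$ changes sign. So the paired terms cancel rather than add, as they must, because you already showed $P$ is odd. The nonzero alternating sum $\frac{2}{N!}\sum_{y=1}^{N/2}(-1)^{N/2-y}\binom{N}{N/2+y}$ is therefore not a coefficient of $P$. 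This is not a bookkeeping problem that a binomial identity can repair.

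Either of two fixes works.

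(i) Since $\deg P\le N-1$, $P$ is also the interpolant on the $N$ nodes $X_N\setminus\{0\}$. Its $x^{N-1}$-coefficient is therefore the divided difference over those nodes. Deleting the node $0$ multiplies each term by $y$, so the summand $|y|(-1)^{N/2-y}/\bigl((N/2+y)!\,(N/2-y)!\bigr)$ is now even in $y$. This gives
\[
\frac{2}{N!}\sum_{y=1}^{N/2}(-1)^{N/2-y}\,y\binom{N}{N/2+y}=\frac{2(-1)^{N/2-1}}{N!}\binom{N-2}{N/2-1}\neq 0 .
\]
Incidentally, this proves the leading coefficient that the paper's Remark on $I_N$ only conjectures.

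(ii) Use your Rolle fallback, but with both halves. $P-1$ vanishes at $1,\dots,N/2$ and $P+1$ vanishes at $-N/2,\dots,-1$. This gives $N-2$ distinct zeros of $P'$ in $(-N/2,-1)\cup(1,N/2)$. Also $P'\not\equiv 0$ because $P(0)=0\neq 1=P(1)$. Hence $\deg P\ge N-1$. The roots of $P-1$ alone only give $\deg P\ge N/2$.

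With either fix your proof is complete, and it improves on the paper's in one respect. Your interpolation step is the paper's in substance; the paper phrases it as the Christoffel--Darboux identity $K_N(x,y)w_N(y+N/2)=\delta_{xy}$. For the degree bound, the paper instead claims the even-degree Fourier coefficients vanish by parity. That holds only for $p=1/2$: in general $k_n^{(p)}(-x)=(-1)^n k_n^{(q)}(x)$ and the shifted weight is not even. For $N=2$, for instance, $\langle \sgn, k_0^{(p)}\rangle = p-q$. Your symmetry-plus-uniqueness argument for $\deg P\le N-1$ works for every $p$. The paper also never checks that the top coefficient is nonzero, which your repaired last step supplies.
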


\begin{proof}
Since $\sgn$ is an odd function, the Fourier coefficients of the even-degree $k_n^{(p)}(x)$ are all zero.  Since $N$ is even, it follows that the maximal degree Krawtchouk polynomials appearing in $\mathcal{F}_N^{(p)}(x)$ is $N-1$.  Therefore, the degree of $\mathcal{F}_N^{(p)}(x)$ is $N-1$.  Since $\mathcal{F}_N^{(p)}(x)$ is an odd function we are reduced to showing that $\mathcal{F}_N^{(p)}(m) = 1$ for $m=1,\dots,N/2$.  

Fix $m \in \lbrace 1,\dots,N/2 \rbrace$.  We have
\begin{align*}
\mathcal{F}_N^{(p)}(m) &= \sum_{n=0}^N \frac{\langle \sgn,k^{(p)}_n(-;N) \rangle}{\| k^{(p)}_n(-;N) \|^2} k^{(p)}_n(m;N) \\
&= \sum_{n=0}^N \left( \sum_{y=-N/2}^{N/2} \sgn(y)k_n^{(p)}(y;N)w_N(y+N/2) \right) \frac{k_n^{(p)}(m;N)}{\| k^{(p)}_n(-;N) \|^2} \\
&= \sum_{y=-N/2}^{N/2} \sgn(y) w_N(y+N/2) \sum_{n=0}^N \frac{k_n^{(p)}(y;N)k_n^{(p)}(m;N)}{\| k^{(p)}_n(-;N) \|^2}.
\end{align*}
Write
\[
K_N(x,y) = \sum_{n=0}^N \frac{k_n^{(p)}(y;N)k_n^{(p)}(x;N)}{\| k^{(p)}_n(-;N) \|^2}.
\]
Then $K_N(x,y)$ is the Christoffel-Darboux kernel of the (shifted) Krawtchouck polynomials.  By \cite[Thm.~3.1.4]{szego}, we have
\[
K_N(x,y)w_N(y+N/2) = \delta_{xy}.
\]
Thus, 
\begin{align*}
\mathcal{F}_N^{(p)}(m)  &=  \sum_{y=-N/2}^{N/2} \sgn(y) w_N(y+N/2) K_N(m,y) \\
&= \sum_{y=-N/2}^{N/2} \sgn(y) \delta_{my} \\
&= \sgn(m) = 1.
\end{align*}
\end{proof}

As discussed above, Proposition \ref{interp_prop} immediately implies that $\mathcal{F}_N^{(p)}(x)$ is independent of $p$ because $\deg \mathcal{F}_N^{(p)} < |\mathcal{P}_N|$.  We will therefore set $p=1/2$ for the remainder of the paper and suppress the superscript ``$(1/2)$'' notation. To conclude this section we give an alternate form of $\mathcal{F}_N(x)$.  While this form is simpler to state, it is less useful for the purposes of Fourier analysis.  However, it will further highlight the deep connections to combinatorics exhibited by the Krawtchouk polynomials. 

Let $I_N(x)$ be the Lagrange form of the interpolating polynomial of the points of $\mathcal{P}_N$.  Then 
\[
I_N(x)=\sum_{i=0}^{N} \sgn(i-N/2)\ell_i(x),
\]
where
\[
\ell_i(x)=\prod_{j=0, j\neq i}^N \frac{x-(j-N/2)}{i-j}.
\]

\begin{prop} \label{interp_prop2}
With all notation as above, we have 
\[
I_N(x)=\sum_{i=0}^{N} \sgn(i-N/2)\binom{N/2+x}{i}\binom{N/2-x}{N-i}.
\]
\end{prop}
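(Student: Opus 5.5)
The plan is to identify each Lagrange basis polynomial $\ell_i(x)$ with the product $\binom{N/2+x}{i}\binom{N/2-x}{N-i}$. The proposition then follows at once by substituting into $I_N(x)=\sum_i \sgn(i-N/2)\ell_i(x)$. I will use uniqueness of polynomial interpolation rather than manipulate the product defining $\ell_i$ directly.

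First change variables to $t = x + N/2$, so the nodes $j - N/2$ become $t = j$ for $j = 0,\dots,N$. In this variable
\[
\ell_i = \prod_{j\ne i}\frac{t-j}{i-j},
\]
and the candidate is $g_i(t) = \binom{t}{i}\binom{N-t}{N-i}$. Here $\binom{t}{k} = t(t-1)\cdots(t-k+1)/k!$ is read as a polynomial in $t$ of degree $k$. Hence $g_i$ is a polynomial of degree $i + (N-i) = N$.

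Next, evaluate $g_i$ at the nodes $t = j \in \{0,\dots,N\}$, splitting into three cases.
\begin{itemize}
\item If $j < i$, then $\binom{j}{i}=0$, since $j$ is a nonnegative integer smaller than $i$ and so one factor $t-j$ of the falling factorial vanishes.
\item If $j > i$, then $N-j$ is a nonnegative integer smaller than $N-i$, so $\binom{N-j}{N-i}=0$ for the same reason.
\item If $j = i$, then $g_i(i) = \binom{i}{i}\binom{N-i}{N-i} = 1$.
\end{itemize}
Thus $g_i(j) = \delta_{ij}$ for all $N+1$ nodes.

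Since $\ell_i$ is the unique polynomial of degree at most $N$ taking the values $\delta_{ij}$ at these nodes, we get $g_i = \ell_i$. Substituting back $t = x+N/2$ and summing against the coefficients $\sgn(i-N/2)$ gives the stated formula for $I_N(x)$.

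The only point requiring care is the vanishing in the cases $j<i$ and $j>i$. It rests on interpreting $\binom{t}{k}$ as the falling-factorial polynomial, which vanishes at every integer $0 \le t < k$; this is the step I would state explicitly. No other obstacle is expected. One could also remark that, by Proposition \ref{interp_prop} and uniqueness, $I_N = \mathcal{F}_N$, although this is not needed here.
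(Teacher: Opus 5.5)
Your proof is correct, but it takes a different route from the paper's. The paper computes $\ell_i(x)$ directly. It splits the numerator $\prod_{j\neq i}(x+N/2-j)$ at the missing node into the falling factorial $i!\binom{N/2+x}{i}$ (from $j<i$) and $(-1)^{N-i}(N-i)!\binom{N/2-x}{N-i}$ (from $j>i$). It then evaluates the denominator $\prod_{j\neq i}(i-j)$ as $i!\,(N-i)!\,(-1)^{N-i}$ and cancels.

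You instead check that $\binom{t}{i}\binom{N-t}{N-i}$ has degree at most $N$ and takes the values $\delta_{ij}$ at the $N+1$ nodes, and conclude by uniqueness of interpolation.

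Your version avoids the sign and factorial bookkeeping, which is the only place the direct computation can go astray. It also matches the uniqueness argument already used in Section~\ref{other_p_section} to show $\mathcal{F}_N^{(p)}$ is independent of $p$. The paper's version is a purely algebraic identity that needs no appeal to uniqueness. It also shows why the binomial form appears: deleting the $i$th factor cuts the full falling factorial into two pieces.
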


\begin{proof}
Since $I_N(x) = \sum_{i=0}^N\sgn(N/2-i)\ell_i(x)$, it remains to determine $\ell_i(x)$. From the definition:
\begin{align*}
\ell_i(x) &= \prod_{j=0, j\neq i}^N \frac{x-(j-N/2)}{i-j} \\
&=
\frac{(x+N/2)(x+N/2-1)\cdots(x+N/2-i+1)(x+N/2-i-1)\cdots(x-N/2)}{i!(N-i)!(-1)^{N-i}}\\
&=\frac{\binom{N/2+x}{i}i!\binom{N/2-x}{N-i}(N-i)!(-1)^{N-i}}{i!(N-i)!(-1)^{N-i}}\\
&=\binom{N/2+x}{i}\binom{N/2-x}{N-i}.
\end{align*}
This completes the proof.
\end{proof}

\begin{rmk}
Numerical experimentation suggests that the leading coefficient of ${I}_N(x)$ (and, therefore, $\mathcal{F}_N(x)$) is given by 
\[
\frac{(-1)^{(N-2)/2}}{(N-1)!} C\left( \frac{N-2}{2} \right),
\]
where $C(n)$ is the $n$th Catalan number.  See Appendix \ref{super_appendix} for further connections to the Catalan numbers. 
\end{rmk}

\section{The Approximation} \label{heuristic}

Recall from the Introduction that the polynomials $\lbrace k_n(x;N) \rbrace_{n=0}^N$ are orthogonal on $[-N/2,N/2]$ with respect to the shifted discrete inner-product
\begin{align*}
\langle f,g \rangle &= \sum_{y=-N/2}^{N/2} f(y)g(y) w_N(y+N/2) \\
&=\frac{1}{2^N}\sum_{y=-N/2}^{N/2} f(y)g(y) \binom{N}{y+N/2}.
\end{align*}
Taking $f = \sgn$, we have
\begin{align} \label{initial_Fourier_approx}
\mathcal{F}_N(x) = \sum_{n=0}^N \frac{\langle \sgn(x),k_n(x;N) \rangle}{\| k_n(x;N) \|^2} k_n(x;N).
\end{align}
A direct application of the inner-product shows that
\begin{align} \label{norm_identity}
\| k_n(x;N) \|^2 = 4^{-n}\binom{N}{n}.
\end{align}
Since $\sgn$ is odd, $w_N$ is even, and $k_n$ is an odd/even function in odd/even degree, it follows that
\begin{align} \label{fourier_projection}
\langle \sgn(x),k_n(x;N) \rangle = \begin{cases} \frac{2}{2^N} \sum_{y=1}^{N/2}k_n(y;N)\binom{N}{N/2+y} & \text{ if $n$ is odd, and} \\
0 & \text{ if $n$ is even.}
\end{cases} 
\end{align}

The remainder of this section is dedicated to determining an explicit expression for $\mathcal{F}_N(x)$ that mirrors the Fourier approximations of the sign function by other families of orthogonal polynomials.

\begin{prop} \label{fourier_approx}
With all notation as above, we have
\[
\mathcal{F}_N(x) = \frac{1}{2^{N-1}} \binom{N}{N/2} \sum_{n=0}^{N-2} \frac{k_{n}(0;N) k_{n+1}(x;N)}{\| k_{n}(-;N) \|^2}.
\]
\end{prop}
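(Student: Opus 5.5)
The plan is to compare the two sides coefficient by coefficient in the basis $\{k_m(x;N)\}$. By \eqref{initial_Fourier_approx} and \eqref{fourier_projection}, only odd $m=n+1$ appear in $\mathcal{F}_N$, and $m\le N-1$ since $\langle \sgn,k_N\rangle=0$. On the right-hand side of the proposition, the terms with $n$ odd vanish because $k_n(0;N)=0$ for odd $n$ (oddness of $k_n$). So it suffices to prove, for every even $n$ with $0\le n\le N-2$,
\[
\frac{\langle \sgn,k_{n+1}(-;N)\rangle}{\|k_{n+1}(-;N)\|^2}=\frac{1}{2^{N-1}}\binom{N}{N/2}\frac{k_n(0;N)}{\|k_n(-;N)\|^2}.
\]

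To compute all the projections at once I would use the generating function that follows from \eqref{Krawtchouk_def} by the binomial theorem,
\[
\sum_{n\ge 0} k_n(x;N)t^n=(1-t/2)^{N/2-x}(1+t/2)^{N/2+x}.
\]
Set $S(t)=\sum_m \langle \sgn,k_m\rangle t^m$. With $a=(1+t/2)/2$ and $b=(1-t/2)/2$, so that $a+b=1$, the substitution $j=y+N/2$ gives
\[
S(t)=\sum_{j=0}^N \sgn(j-N/2)\binom{N}{j}a^jb^{N-j}=P(X>N/2)-P(X<N/2),
\]
where $X\sim\mathrm{Bin}(N,a)$. I would then differentiate in $a$ using the standard identity $\frac{d}{da}P(X\ge k+1)=N\binom{N-1}{k}a^kb^{N-1-k}$ and its mirror image for the lower tail. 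The two contributions combine to
\[
\frac{dS}{da}=\frac N2\binom{N}{N/2}(ab)^{N/2-1}.
\]
Since $ab=\tfrac14(1-t^2/4)$ and $da/dt=\tfrac12$, this yields
\[
S'(t)=\frac N4\binom{N}{N/2}4^{1-N/2}(1-t^2/4)^{N/2-1}.
\]

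Next, the generating function with $x=0$ shows that $\sum_n k_n(0;N)t^n=(1-t^2/4)^{N/2}$. Hence $k_{2j}(0;N)=(-1/4)^j\binom{N/2}{j}$, and likewise $(1-t^2/4)^{N/2-1}$ has coefficients $(-1/4)^j\binom{N/2-1}{j}$. Comparing coefficients of $t^{n}$ in $S'(t)$ with $n=2j$ gives an explicit formula for $(n+1)\langle \sgn,k_{n+1}\rangle$.

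The remaining step is to verify the displayed coefficient identity using the norm formula \eqref{norm_identity}, $\|k_n\|^2=4^{-n}\binom{N}{n}$. This is routine manipulation of binomials: $\binom{N/2-1}{j}$ versus $\binom{N/2}{j}$, and $\binom{N}{2j+1}$ versus $\binom{N}{2j}$, with all powers of $2$ and $4$ tracked.

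I expect the main obstacle to be the derivative of $S$. The tail-probability derivative identity must be applied correctly to both the upper tail ($X\ge N/2+1$) and the lower tail ($X\le N/2-1$), with the middle term $j=N/2$ correctly dropped. The final constant bookkeeping should be double-checked against the small case $N=2$, where $\mathcal{F}_2(x)=x$.
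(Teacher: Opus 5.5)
Your generating-function route is valid and genuinely different from the paper's, but one constant in it is wrong. Since $a=(1+t/2)/2=\tfrac12+\tfrac t4$, you have $da/dt=\tfrac14$, not $\tfrac12$. Your formula for $dS/da$ is correct, so the chain rule actually gives
\[
S'(t)=\frac{N}{8}\binom{N}{N/2}4^{1-N/2}\Bigl(1-\frac{t^2}{4}\Bigr)^{N/2-1}=\frac{N}{2^{N+1}}\binom{N}{N/2}\Bigl(1-\frac{t^2}{4}\Bigr)^{N/2-1}.
\]
With your displayed constant, the final coefficient comparison fails by exactly a factor of $2$. For $N=2$ one has $S(t)=a^2-b^2=a-b=t/2$, so $\langle\sgn,k_1\rangle=\tfrac12$, whereas your $S'$ gives $1$ and hence $\mathcal{F}_2(x)=2x$. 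Your planned $N=2$ check would catch this.

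With the corrected constant,
\[
(2j+1)\langle\sgn,k_{2j+1}\rangle=\frac{N}{2^{N+1}}\binom{N}{N/2}\bigl(-\tfrac14\bigr)^j\binom{N/2-1}{j}.
\]
Combine this with $\binom{N/2-1}{j}=\frac{N-2j}{N}\binom{N/2}{j}$, $\binom{N}{2j+1}=\frac{N-2j}{2j+1}\binom{N}{2j}$, $\|k_n\|^2=4^{-n}\binom{N}{n}$ and $k_{2j}(0;N)=(-\tfrac14)^j\binom{N/2}{j}$. The required coefficient identity then becomes an exact equality, so your argument closes.

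As for the comparison, the paper computes each projection separately. It rewrites $k_n(y;N)\binom{N}{N/2+y}$ via three binomial identities, summing over $y$ to obtain the double sum $\sum_v\sum_s(-1)^v\binom{n}{v}\binom{N-n}{s}$. It evaluates that sum in Appendix~\ref{super_appendix} by forward differences plus a closed form, quoted from the literature, for a coefficient of $(1-x)^{2q}(1+x)^{2p+1}$ as the super Catalan quantity $T(p,q)$. Only at the end is the answer recognised as a multiple of $k_n(0;N)/\|k_n\|^2$.

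You instead handle all $n$ simultaneously and replace the super Catalan evaluation by the elementary derivative of a binomial tail. You also get a structural explanation of why $k_n(0;N)$ appears: $\sum_n k_n(0;N)t^n=(1-t^2/4)^{N/2}$, while $S'(t)$ is a constant multiple of $(1-t^2/4)^{N/2-1}$. Your route is shorter and self-contained. The paper's route buys the explicit connection between the Fourier coefficients and the super Catalan numbers, which the authors want to highlight.
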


\begin{proof}
Applying Equation (\ref{fourier_projection}) to Equation (\ref{initial_Fourier_approx}) allows us to write
\begin{align}
\mathcal{F}_N(x) = \frac{1}{2^{N-1}} \sum_{n=0 \atop n \text{ odd}}^N \frac{1}{\| k_n(-;N)\|^2} \sum_{y=1}^{N/2} k_n(y;N)\binom{N}{N/2+y}k_n(x;N).
\end{align}
Now consider the product $k_n(y;N)\binom{N}{N/2+y}$:
\begin{align} \label{F12_identity}
k_n(y;N)\binom{N}{N/2+y} &= \frac{1}{2^n} \sum_{v=0}^{n} (-1)^{n-v} \binom{N/2-y}{n-v}\binom{N/2+y}{v}\binom{N}{N/2+y}.
\end{align}
Elementary manipulation of binomial coefficients allows us to write:
\begin{align*}
\binom{N/2+y}{v}\binom{N}{N/2+y} &= \binom{N}{v} \binom{N-v}{N/2-y},  \\
\binom{N/2-y}{n-v} \binom{N-v}{N/2-y} &= \binom{N-v}{n-v}\binom{N-n}{N/2+y-v}, \text{ and} \\
\binom{N-v}{n-v}\binom{N}{v} &= \binom{N}{n} \binom{n}{v}.
\end{align*}
Altogether, this implies
\begin{align*}
\mathcal{F}_N(x) &= \frac{1}{2^{N-1}} \sum_{n=0 \atop n\text{ odd}}^N \frac{\binom{N}{n}}{(-2)^n \|k_n(-;N)\|^2} \sum_{y=1}^{N/2} \sum_{v=0}^n (-1)^v \binom{n}{v} \binom{N-n}{N/2+y-v}k_n(x;N),
\end{align*}
which simplifies to
\begin{align*}
\mathcal{F}_N(x) &= \frac{1}{2^{N-1}} \sum_{n=0 \atop n\text{ odd}}^N (-2)^n \sum_{y=1}^{N/2} \sum_{v=0}^n (-1)^v \binom{n}{v} \binom{N-n}{N/2+y-v}k_n(x;N),
\end{align*}
using Equation (\ref{norm_identity}).  Interchanging the sums and reindexing, we rewrite $\mathcal{F}_N(x)$ as
\begin{align*}
\mathcal{F}_N(x) &= \frac{1}{2^{N-1}} \sum_{n=0 \atop n\text{ odd}}^N (-2)^n \left(\sum_{v=0}^{n} (-1)^v \binom{n}{v} \sum_{s=N/2+1-v}^{N-v} \binom{N-n}{s} \right)k_n(x;N).
\end{align*}

Recall that the Super Catalan numbers $S(n,m)$ are given by 
\[
S(n,m) = \frac{(2n)!(2m)!}{n!m!(m+n)!},
\]
and define $T(n,m) = \frac{2n+1}{m+n+1} S(n,m)$.  We point the reader to Appendix \ref{super_appendix} where we prove as part of a more general result that 
\begin{align} \label{super_identity_in_theorem}
\sum_{v=0}^n \sum_{s=N/2+1-v}^{N-v} (-1)^v\binom{n}{v}\binom{N-n}{s} = (-1)^{\frac{n+1}{2}}T\left( \frac{N-n-1}{2},\frac{n-1}{2}\right).
\end{align}

Applying Equation (\ref{super_identity_in_theorem}) to our expression for $\mathcal{F}_N(x)$ yields
\[
\mathcal{F}_N(x) = \frac{1}{2^{N-1}} \sum_{n=0 \atop n \text{ odd}}^{N}  (-2)^n(-1)^{\frac{n+1}{2}} \frac{(N-n)!(n-1)!}{\left( \frac{N-n-1}{2}\right)! \left(\frac{n-1}{2} \right)!} k_n(x;N).
\]
Finally, using the fact that 
\[
k_m(0;N) = \begin{cases} \binom{N/2}{m/2}\frac{(-1)^{m/2}}{2^{m}} & \text{ if $m$ is even, and} \\
0 & \text{ if $m$ is odd},
\end{cases} 
\]
together with Equation \ref{norm_identity}, allows us to reindex and simplify $\mathcal{F}_N(x)$ as
\[
\mathcal{F}_{N}(x) = \frac{1}{2^{N-1}}\binom{N}{N/2}\sum_{n=0}^{N-2} \frac{k_n(0;N)k_{n+1}(x;N)}{\| k_n(-;N)\|^2},
\]
as claimed.
\end{proof}

\section{Steepness} \label{steepness_section}

We now prove the main result of the paper, that the steepness of $\mathcal{F}_N(x)$ approaches $\log 4$ as $N \to \infty$.  As described in the Introduction, this is different from the classical orthogonal polynomials and trigonometric functions. 

\begin{thm} \label{steepness_thm}
The steepness of $\mathcal{F}_N(x)$ approaches $\log(4)$ as $N \to \infty$.
\end{thm}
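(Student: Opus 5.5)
The plan is to avoid the Krawtchouk expansion of Proposition \ref{fourier_approx} altogether. Instead, I would use Proposition \ref{interp_prop}, which identifies $\mathcal{F}_N$ with the interpolating polynomial $I_N$ of the points $\mathcal{P}_N$. Write $M=N/2$ and index the nodes by $m\in\{-M,\dots,M\}$. Let $\ell_m$ denote the Lagrange basis polynomial at the node $m$. Then
\[
\mathcal{F}_N'(0)=\sum_{m=-M}^{M}\sgn(m)\,\ell_m'(0)=2\sum_{m=1}^{M}\ell_m'(0).
\]
The second equality holds because $\ell_{-m}'(0)=\ell_m'(0)$, by the symmetry $x\mapsto -x$ of the node set.

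The first step is to compute $\ell_m'(0)$ explicitly for $m\neq 0$. Since $0$ is itself a node, $\ell_m(x)$ contains the factor $x$. Differentiating at $0$ therefore just deletes that factor, giving
\[
\ell_m'(0)=\frac{\prod_{k\neq 0,m}(-k)}{\prod_{k\neq m}(m-k)},
\]
where both products run over $k\in\{-M,\dots,M\}$. The numerator equals $(-1)^M (M!)^2/(-m)$. The denominator equals $(-1)^{M-m}(M+m)!(M-m)!$. After simplification this should give
\[
\ell_m'(0)=\frac{(-1)^{m+1}}{m}\,\frac{\binom{2M}{M+m}}{\binom{2M}{M}},
\qquad\text{so}\qquad
\mathcal{F}_N'(0)=2\sum_{m=1}^{M}\frac{(-1)^{m+1}}{m}\,r_{M,m},
\]
where $r_{M,m}=\binom{2M}{M+m}/\binom{2M}{M}$. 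I would sanity-check the signs against small $N$ (for $N=2$ one gets $\mathcal{F}_2(x)=x$, slope $1$).

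The second step is the limit. For fixed $m$ we have $r_{M,m}=\prod_{j=1}^{m}\frac{M-j+1}{M+j}\to 1$ as $M\to\infty$. Hence each term converges to the corresponding term of $2\sum_{m\ge1}(-1)^{m+1}/m=2\log 2=\log 4$.

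The main obstacle is justifying the interchange of limit and sum, and I would handle it with an alternating-series tail estimate that is uniform in $M$. For fixed $M$, $r_{M,m}$ is decreasing in $m$ and lies in $[0,1]$. So the sequence $r_{M,m}/m$ is decreasing and nonnegative. It follows that
\[
\Bigl|\sum_{m>K}\frac{(-1)^{m+1}}{m}r_{M,m}\Bigr|\le \frac{1}{K+1},
\]
and the same bound holds for the tail of the series for $\log 2$. Given $\varepsilon>0$, I would first choose $K$ with $1/(K+1)<\varepsilon$. I would then let $M\to\infty$ in the finitely many terms with $m\le K$. Together these give $\limsup_{N\to\infty}|\mathcal{F}_N'(0)-\log 4|\le 4\varepsilon$, which proves the theorem.
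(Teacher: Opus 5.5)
Your proof is correct, and it takes a genuinely different and much shorter route than the paper.

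\textbf{How the paper argues.} The paper works entirely in the Krawtchouk basis. It uses the expansion of Proposition~\ref{fourier_approx}, which depends on the Super Catalan identity in the appendix. It computes $k_m'(0;N)$ from the generating function, which gives the double sum $S(M)$. It then proves the exact formula $S(M)=T(M)=2\sum_{k=1}^M\frac{1}{M+k}$ by a fairly delicate induction, using the identities for $C$, $D$, $X$ and the Wallis-integral evaluation of $X(M,M+1)$. The limit follows at once.

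\textbf{How you argue.} You use only Proposition~\ref{interp_prop}, which reduces everything to Lagrange interpolation at equispaced nodes. This gives the single alternating sum $2\sum_{m=1}^M\frac{(-1)^{m+1}}{m}r_{M,m}$ in a few lines; it checks against the paper's values for small $M$. Your uniform tail bound is sound. Since $r_{M,m+1}/r_{M,m}=\frac{M-m}{M+m+1}<1$, the terms $r_{M,m}/m$ are nonincreasing and nonnegative. The estimate $\frac{1}{K+1}$ then holds for the finite alternating tail, which justifies interchanging the limit and the sum.

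\textbf{What each route buys.} Your argument avoids all the combinatorial machinery. It also shows that the bounded steepness is a feature of interpolating $\sgn$ at equispaced nodes, not of the Krawtchouk basis itself. The paper's route gives the exact value $\mathcal{F}_N'(0)=2\sum_{k=1}^{N/2}\frac{1}{N/2+k}$ for every $N$. From this one gets monotone convergence and the rate $\log 4-\mathcal{F}_N'(0)\sim 1/N$, which matches the tabulated values. Your argument yields only the limit. Since both arguments compute the same quantity, together they imply
$\sum_{m=1}^M\frac{(-1)^{m+1}}{m}\binom{2M}{M+m}/\binom{2M}{M}=\sum_{k=1}^M\frac{1}{M+k}$.
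A direct proof of this identity would recover the exact formula inside your framework.

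\textbf{A sign slip to fix.} The symmetry you cite has the wrong sign. From $\ell_{-m}(x)=\ell_m(-x)$ you get $\ell_{-m}'(0)=-\ell_m'(0)$; for $N=2$, $\ell_{\pm1}'(0)=\pm\frac12$. This sign, combined with $\sgn(-m)=-\sgn(m)$, is what makes the $\pm m$ terms equal and produces the factor $2$. With $\ell_{-m}'(0)=\ell_m'(0)$ as you wrote it, the sum would vanish. Your displayed formula and everything after it are nonetheless correct.
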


We start by deriving an expression for $\mathcal{F}_N'(0)$.  By the linearity of the derivative, this will follow immediately once we have an expression for $k_m'(0;N)$, for positive integers $m$ and $N$ with $N$ even.

\begin{lem} \label{Krawtchouk_steepness}
Let $m,N \geq 1$ with $N$ even and $m \leq N$.  Then 
\[
k_m'(0;N) = \frac{1}{2^{m-1}}\sum_{\ell = 0}^{(m-1)/2} \frac{(-1)^\ell  \binom{N/2}{\ell}}{(m-2\ell)}
\]
if $m$ is odd, and 0 if $m$ is even.
\end{lem}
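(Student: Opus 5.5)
The plan is to use a generating function in an auxiliary variable $t$. From the definition (\ref{Krawtchouk_def}), $k_m(x;N)$ is the coefficient of $t^m$ in a product of two binomial series. With $u = t/2$,
\[
\sum_{v\ge 0}\binom{N/2+x}{v}u^v \cdot \sum_{w\ge 0}\binom{N/2-x}{w}(-u)^{w} = (1+u)^{N/2+x}(1-u)^{N/2-x}.
\]
The coefficient of $u^m$ on the left is $\sum_{v}(-1)^{m-v}\binom{N/2+x}{v}\binom{N/2-x}{m-v}$. Hence
\[
\sum_{m\ge 0} k_m(x;N)\,t^m = \left(1-\tfrac{t^2}{4}\right)^{N/2}\left(\frac{1+t/2}{1-t/2}\right)^{x}.
\]
I will treat this as an identity of formal power series in $t$ whose coefficients are polynomials in $x$. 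The binomial series $(1+u)^a=\sum_v \binom{a}{v}u^v$ holds formally for every real (indeed indeterminate) $a$, so the identity is valid for all real $x$. Alternatively, both sides are analytic in $(x,t)$ for $|t|<2$.

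Next I differentiate in $x$ and set $x=0$. Since each coefficient is a polynomial in $x$, differentiation commutes with extracting coefficients. This is the only point needing care, and the formal-power-series viewpoint handles it: $\partial_x$ acts coefficientwise on $\mathbf{Q}[x][[t]]$ and is a derivation. Writing $\left(\frac{1+u}{1-u}\right)^x = \exp\!\left(x\log\frac{1+u}{1-u}\right)$, we get
\[
\sum_{m\ge0} k_m'(0;N)\,t^m = (1-u^2)^{N/2}\log\frac{1+u}{1-u},\qquad u=t/2 .
\]

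Finally I expand both factors:
\[
(1-u^2)^{N/2}=\sum_{\ell}(-1)^\ell\binom{N/2}{\ell}u^{2\ell},\qquad \log\frac{1+u}{1-u}=2\sum_{j \text{ odd}}\frac{u^j}{j}.
\]
The product contains only odd powers of $u$, so $k_m'(0;N)=0$ for even $m$. This is also consistent with $k_m$ being even in that case.

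For odd $m$, the coefficient of $u^m$ is
\[
2\sum_{\ell=0}^{(m-1)/2}\frac{(-1)^\ell\binom{N/2}{\ell}}{m-2\ell},
\]
coming from the terms with $j=m-2\ell$. Converting from $u^m$ to $t^m$ multiplies this by $2^{-m}$, which gives exactly the factor $2^{-(m-1)}$ in the statement. The hypothesis $m\le N$ plays no role beyond ensuring $k_m$ is one of the polynomials in the family.
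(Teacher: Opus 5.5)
Your proof is correct and follows the same route as the paper. Both write the generating function as $(1-u^2)^{N/2}\exp\bigl(x\log\frac{1+u}{1-u}\bigr)$, extract the term linear in $x$, and convolve $\sum_\ell(-1)^\ell\binom{N/2}{\ell}u^{2\ell}$ with $2\sum_{j\ \mathrm{odd}}u^j/j$. The differences are cosmetic: your substitution $u=t/2$ builds in the $2^{-m}$ scaling, and you make explicit the formal-power-series justification for differentiating coefficientwise, which the paper leaves implicit. Your displays also correctly keep the exponent $N/2$ and the factor $2$ in the logarithm's expansion, both of which are dropped in the paper's intermediate lines.
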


\begin{proof}
Since $k_m'(0;N)$ is the linear coefficient of $k_m(x;N)$, we require an expression for the linear term of Equation  (\ref{Krawtchouk_def}).  The generating function of $2^m k_m(x;N)$ is 
\[
G(t;N) = (1-t)^A(1+t)^B,
\]
where $A = N/2 - x$ and $B = N/2 + x$ \cite[(2.82.4)]{szego}. Then we may write
\begin{align*}
G(t;N) &= (1-t^2)^{N/2} \left( \frac{1+t}{1-t} \right)^x \\
&= (1-t^2)^{N/2} \exp \left(x \log \left( \frac{1+t}{1-t} \right) \right) \\
&= (1-t^2)^{N/2} \left(1 + x \log \left( \frac{1+t}{1-t} \right) + O(x^2)\right). 
\end{align*}
Therefore, the linear coefficient of $k_m(x;N)$ is the coefficient of $t^m$ in $(1-t^2)  \log \left( \frac{1+t}{1-t} \right)$, scaled by $2^{-m}$.  We multiply the power series representations
\[
(1-t^2)  \log \left( \frac{1+t}{1-t} \right) = \left( \sum_{\ell=0}^\infty (-1)^{\ell} \binom{N/2}{\ell} t^{2\ell} \right) \left( \sum_{\ell=0}^\infty \frac{t^{2\ell+1}}{2\ell+1} \right),
\]
and extract the coefficient of $t^m$ by convolution; we find it to be 
\[
2\sum_{\ell = 0}^{(m-1)/2} \frac{(-1)^\ell  \binom{N/2}{\ell}}{(m-2\ell)},
\]
if $m$ is odd and $0$ is $m$ is even.  Scaling by $2^{-m}$ completes the proof.
\end{proof}

\begin{prop} \label{exact_steepness_formula}
Let $c(n;N) = (-1)^{n/2} \binom{N/2}{n/2} / \binom{N}{n}$ if $n$ is even and 0 if $n$ is odd.  Then the steepness of  $\mathcal{F}_N(x)$ is given by
\[
\frac{1}{2^{N-1}} \binom{N}{N/2} \sum_{n=0}^{N-2} c(n;N) \sum_{\ell = 0}^{n/2} \frac{(-1)^\ell  \binom{N/2}{\ell}}{(n+1-2\ell)}.
\]
\end{prop}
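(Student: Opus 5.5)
The plan is to differentiate the closed form of Proposition \ref{fourier_approx} term by term at $x=0$ and substitute the value of $k_{n+1}'(0;N)$ from Lemma \ref{Krawtchouk_steepness}. By linearity of the derivative,
\[
\mathcal{F}_N'(0) = \frac{1}{2^{N-1}}\binom{N}{N/2}\sum_{n=0}^{N-2} \frac{k_n(0;N)}{\|k_n(-;N)\|^2}\, k_{n+1}'(0;N).
\]
The odd-$n$ terms vanish, since $k_n(0;N)=0$ for odd $n$; this matches $c(n;N)=0$ for odd $n$. So we only need to treat even $n$.

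For even $n$, we use the value of $k_n(0;N)$ recorded at the end of the proof of Proposition \ref{fourier_approx} together with the norm identity (\ref{norm_identity}). This gives
\[
\frac{k_n(0;N)}{\|k_n(-;N)\|^2} = \frac{(-1)^{n/2}\binom{N/2}{n/2}2^{-n}}{4^{-n}\binom{N}{n}} = 2^{n}\, c(n;N).
\]
Since $n$ is even, $m=n+1$ is odd and $m\le N-1$, so Lemma \ref{Krawtchouk_steepness} applies. It gives
\[
k_{n+1}'(0;N)=\frac{1}{2^{n}}\sum_{\ell=0}^{n/2}\frac{(-1)^\ell\binom{N/2}{\ell}}{n+1-2\ell},
\]
because the upper limit $(m-1)/2$ equals $n/2$ and $m-2\ell = n+1-2\ell$.

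Multiplying the two factors, the powers $2^n$ and $2^{-n}$ cancel. Each even-$n$ summand is therefore $c(n;N)\sum_{\ell=0}^{n/2}(-1)^\ell\binom{N/2}{\ell}/(n+1-2\ell)$. Restoring the prefactor $\frac{1}{2^{N-1}}\binom{N}{N/2}$ and summing over $n$ gives exactly the stated expression.

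There is no real obstacle here; the argument is bookkeeping. The points that need care are the powers of $2$ (the $4^{-n}$ in the norm against the $2^{-n}$ in $k_n(0;N)$, and the $2^{-(m-1)}$ in the lemma) and the index shift from $m$ to $n+1$ in the summation limit. I would double-check these exponents against a small case, such as $N=2$, where $\mathcal{F}_2(x)=x$ and only the $n=0$ term survives, giving $\frac{1}{2}\binom{2}{1}\cdot c(0;2)\cdot 1 = 1$.
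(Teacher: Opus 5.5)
Your proposal is correct and follows essentially the same route as the paper: differentiate the closed form of Proposition \ref{fourier_approx} at $0$, use $k_n(0;N)$ and the norm identity to get $2^n c(n;N)$, apply Lemma \ref{Krawtchouk_steepness} with $m=n+1$, and cancel the powers of $2$. Your $N=2$ sanity check, which gives steepness $1$ for $\mathcal{F}_2(x)=x$, is a useful confirmation of the exponent bookkeeping.
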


\begin{proof}
Recall that 
\begin{align}
k_n(0;N) &= \begin{cases} \frac{(-1)^{n/2}}{2^n} \binom{N/2}{n/2} & \text{ if $n$ is even} \\
0& \text{ if $n$ is odd, and} 
\end{cases}  \label{k_n(0;N)}\\
\| k_n(-;N)\|^2 &= \frac{1}{4^n} \binom{N}{n}. \label{norm}
\end{align}
By Proposition \ref{fourier_approx}, we have
\[
\mathcal{F}_N(x) = \frac{1}{2^{N-1}} \binom{N}{N/2} \sum_{n=0}^{N-2} \frac{k_{n}(0;N) k_{n+1}(x;N)}{\| k_{n}(-;N) \|^2},
\]
hence
\[
\mathcal{F}_N'(0) = \frac{1}{2^{N-1}} \binom{N}{N/2} \sum_{n=0}^{N-2} \frac{k_{n}(0;N) k_{n+1}'(0;N)}{\| k_{n}(-;N) \|^2}.
\]
Now apply Equations (\ref{k_n(0;N)}) and (\ref{norm}) as well as Lemma \ref{Krawtchouk_steepness} to $k_{n+1}'(0;N)$.  Canceling the powers of $2^n$  yields the claimed formula for $\mathcal{F}_N'(0;N)$.
\end{proof}

Proposition \ref{exact_steepness_formula} equips us with an exact expression for $\mathcal{F}_N'(0)$, which we can evaluate for large $N$:
\begin{center}
\begin{tabular}{rr}
$N$ & $\mathcal{F}_N'(0)$ \\
\hline
400 & $\approx     1.38379$ \\
1000 & $\approx   1.38529$ \\
10000 & $\approx 1.38619$\\
20000 & $\approx 1.38624$
\end{tabular}
\end{center}
By comparison, we have $\log 4 \approx 1.38629$. 

\begin{rmk}
The explicit formula for $\mathcal{F}_N'(0)$ allows for much quicker computation than relying on the definition of $\mathcal{F}_N(x)$.  For example, the calculation of $\mathcal{F}_{20000}'(0)$ took roughly 2 minutes on a standard laptop, in contrast to those of Section \ref{introduction}.
\end{rmk}

Since the sum of Proposition \ref{exact_steepness_formula} is over even values of $n$, we re-index by setting $m = n/2$ and $M = N/2$ and define
\begin{align} \label{S(M)_definition}
S(M) = \frac{1}{2^{2M-1}}\binom{2M}{M} \sum_{m=0}^{M-1} (-1)^m \frac{\binom{M}{m}}{\binom{2M}{2m}} \sum_{\ell = 0}^{m} \frac{(-1)^{\ell}\binom{M}{\ell}}{2m+1-2\ell}.
\end{align}
We further define 
\begin{align} \label{T(M)_definition}
T(M) = 2 \sum_{k=1}^{M} \frac{1}{M+k},
\end{align}
and it is well known that $\lim_{M \to \infty} \sum_{k=1}^{M} \frac{1}{M+k} = \log 2$. We will prove that $S(M) = T(M)$ for all $M \geq 1$.  Then Theorem \ref{steepness_thm} will follow immediately as a corollary.

\begin{thm} \label{induction_theorem}
For $M \geq 1$ define $S(M)$ and $T(M)$ as in Equations (\ref{S(M)_definition}) and (\ref{T(M)_definition}).  Then $S(M) = T(M)$ for all $M \geq 1$.
\end{thm}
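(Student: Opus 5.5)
We prove $S(M)=T(M)$ by turning the double sum in $S(M)$ into a single integral. The natural tool is the identity
\[
\frac{1}{2j+1}=\int_0^1 t^{2j}\,{\rm d}t .
\]
This is exactly how the inner sum arose in Lemma \ref{Krawtchouk_steepness}: there, $\sum_{\ell=0}^m (-1)^\ell\binom{M}{\ell}/(2m+1-2\ell)$ is the coefficient of $t^{2m+1}$ in $(1-t^2)^M\log\frac{1+t}{1-t}$. Writing $\log\frac{1+t}{1-t}=\int_0^1 \frac{2t}{1-u^2t^2}\,{\rm d}u$, we get
\[
\sum_{\ell=0}^m \frac{(-1)^\ell\binom{M}{\ell}}{2m+1-2\ell}=\int_0^1 [t^{2m}]\,\frac{(1-t^2)^M}{1-u^2t^2}\,{\rm d}u=\int_0^1 \sum_{\ell=0}^m(-1)^\ell\binom{M}{\ell}u^{2(m-\ell)}\,{\rm d}u .
\]
Substituting this into the definition of $S(M)$, we interchange the finite sum over $m$ with the integral. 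This reduces the problem to evaluating, for each $u$, the polynomial
\[
P_M(u)=\sum_{m=0}^{M-1}(-1)^m\frac{\binom{M}{m}}{\binom{2M}{2m}}\sum_{\ell=0}^m(-1)^\ell\binom{M}{\ell}u^{2(m-\ell)} .
\]

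On the other side, $T(M)=2\sum_{k=1}^M\frac1{M+k}=2\int_0^1 \frac{x^M(1-x^M)}{1-x}\,{\rm d}x$. The aim is to show that $\frac{1}{2^{2M-1}}\binom{2M}{M}\int_0^1 P_M(u)\,{\rm d}u$ equals this. The weight $\binom{M}{m}/\binom{2M}{2m}$ has a beta-integral form:
\[
\frac{\binom{M}{m}}{\binom{2M}{2m}}=\frac{M!\,(2m)!\,(2M-2m)!}{m!\,(M-m)!\,(2M)!},
\]
and $\frac{(2m)!}{m!}=4^m\frac{\Gamma(m+1/2)}{\sqrt\pi}$. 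Combined with the prefactor $\binom{2M}{M}/2^{2M-1}$, this weight becomes a product of central-binomial-type factors, each of which can be written as $\frac{1}{\pi}\int_0^{\pi}\cos^{2m}\theta\,{\rm d}\theta$-style integrals. The plan is to represent everything as a double integral, sum the geometric/binomial series in closed form inside the integral, and then evaluate.

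Because this route may get messy, I would run a cleaner alternative in parallel: induction on $M$, as the theorem's label suggests. Compute $S(M+1)-S(M)$ and show it equals
\[
T(M+1)-T(M)=\frac{2}{2M+1}-\frac{1}{M+1}=\frac{1}{(2M+1)(M+1)} .
\]
To do this, I would first swap the order of summation in $S(M)$ by setting $j=m-\ell$, so that
\[
S(M)=\frac{\binom{2M}{M}}{2^{2M-1}}\sum_{j\ge0}\frac{1}{2j+1}\sum_{\ell}(-1)^{j}\binom{M}{\ell}\binom{M}{j+\ell}\Big/\binom{2M}{2j+2\ell}.
\]
The inner sum over $\ell$ is hypergeometric, so Zeilberger's algorithm (or a Chu--Vandermonde/Dixon-type evaluation, since $\binom{M}{\ell}\binom{M}{j+\ell}$ summed is Vandermonde-like) should give it a closed form $c_j(M)$. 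The verification then reduces to checking a single-sum identity $\sum_j c_j(M)/(2j+1)=T(M)$, which again follows by telescoping in $M$, with the base case $S(1)=1=T(1)$ checked directly.

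The main obstacle is the inner sum over $\ell$. The denominator $\binom{2M}{2m}$ couples $\ell$ and $j$, which spoils a straightforward Vandermonde evaluation. I expect to need a WZ-pair certificate, or equivalently the beta-integral representation of $1/\binom{2M}{2m}=(2M+1)\int_0^1 x^{2m}(1-x)^{2M-2m}\,{\rm d}x$. With that representation the $\ell$-sum becomes coefficient extraction from $(1-t^2)^M$ and the $m$-sum becomes a binomial expansion, so the whole expression collapses to an integral of the form $\int_0^1\!\!\int_0^1(\cdots)^M\,{\rm d}x\,{\rm d}u$. This integral should reduce to $2\int_0^1 x^M\frac{1-x^M}{1-x}\,{\rm d}x$. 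I would try this integral route first, and fall back on computer-assisted Zeilberger recurrences for $S(M)$ if it fails.
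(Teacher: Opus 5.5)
\textbf{There is a genuine gap: the proposal does not yet prove anything.} Both of your routes stop at ``should give'', ``should reduce'' or ``I would try''. The one step that carries all the content is never carried out: either evaluating the integral, or establishing $S(M+1)-S(M)=\frac{1}{(2M+1)(M+1)}$.

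The auxiliary facts you record are correct. These are the integral formula for the inner sum, the integral for $T(M)$, the beta integral for $1/\binom{2M}{2m}$, and the target difference, which is exactly the paper's $S(M+1)=S(M)-\frac{1}{M+1}+\frac{2}{2M+1}$. (One small slip: the inner sum is \emph{half} the coefficient of $t^{2m+1}$ in $(1-t^2)^M\log\frac{1+t}{1-t}$, though your final integral formula is right.)

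However, the mechanism you propose for the inductive route fails. After setting $j=m-\ell$ the range is $j+\ell\le M-1$, which you dropped. With it, the inner sum is $c_j(M)=\sum_{\ell=0}^{M-1-j}\binom{M}{\ell}\binom{M}{j+\ell}/\binom{2M}{2j+2\ell}$. This is a ${}_2F_1$ with term ratio $\frac{(M-\ell)(2j+2\ell+1)}{(\ell+1)(2M-2j-2\ell-1)}$, cut off at $\ell=M-1-j$ rather than terminating naturally. So neither Chu--Vandermonde nor Dixon applies, and the values show no product structure: for $M=4$ one gets $c_0,c_1,c_2,c_3=\frac{93}{35},\frac{47}{35},\frac{23}{35},\frac{1}{7}$. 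Zeilberger's algorithm would at best return an inhomogeneous recurrence with boundary terms, and handling those is precisely the work left undone.

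The integral route has the same problem in another form. Because of the constraint $\ell\le m$, after inserting the beta integral the $m$-sum is the incomplete binomial sum $\sum_{m=\ell}^{M-1}\binom{M}{m}(-x^2u^2)^m(1-x)^{2M-2m}$, which does not collapse. The claim that the double integral ``should reduce'' to $2\int_0^1x^M\frac{1-x^M}{1-x}\,{\rm d}x$ is therefore the theorem restated, not a step toward it.

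\textbf{The missing idea.} The paper keeps the inner sum $X(m,M)=\sum_{\ell=0}^m\frac{(-1)^\ell\binom{M}{\ell}}{2m+1-2\ell}$ intact and sums by parts in $m$. Two relations make this work:
\begin{itemize}
\item Pascal's rule gives $X(m,M+1)=X(m,M)-X(m-1,M)$.
\item The weights $C(m,M)=(-1)^m\binom{M}{m}/\binom{2M}{2m}$ satisfy $\frac{2M+1}{2M+2}C(m,M+1)=D(m,M)$, where $D(m,M)=\frac{2M-2m+1}{2M+2}C(m,M)$ obeys $D(m,M)-D(m+1,M)=C(m,M)$.
\end{itemize}
Abel summation then converts $S(M+1)$ into $S(M)$ plus boundary terms involving only $X(M,M+1)$ and $X(M-1,M)$. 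Each of these is a complete binomial sum minus a single term. Your own device $\frac{1}{2k+1}=\int_0^1 x^{2k}\,{\rm d}x$, followed by an integration by parts, reduces them to a Wallis integral. This gives $X(M,M+1)=(-1)^{M+1}\bigl(1-2^{2M+1}/\binom{2M+1}{M+1}\bigr)$, and collecting terms yields $S(M+1)=S(M)-\frac{1}{M+1}+\frac{2}{2M+1}$.

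So your integral representation is the right tool, but only for the boundary terms. What your plan lacks is the telescoping relation expressing $C(\cdot,M+1)$ through differences of $C(\cdot,M)$.
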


Before beginning the proof we set some convenient notation.  For integers $m$ and $M$ with $0 \leq m \leq M-1$, we define
\begin{align}
C(m,M) &=(-1)^m \frac{\binom{M}{m}}{\binom{2M}{2m}} \\
D(m,M) &= C(m,M) \left(\frac{2M-2m+1}{2M+2} \right). \label{D(m,M)_definition}
\end{align}
By applying well known properties of binomial coefficients, one verifies that
\begin{align}
\left( \frac{2M+1}{2M+2} \right) C(m,M+1) &= D(m,M),\text{ and} \label{CD_identity_1} \\
D(m,M) - D(m+1,M) &= C(m,M). \label{CD_identity_2} 
\end{align}
We also define 
\begin{align}
X(m,M) &= \sum_{\ell=0}^m \frac{(-1)^{\ell} \binom{M}{\ell}}{2m+1-2\ell}
\end{align}
and verify by Pascal's identity $\left(\binom{a}{b} = \binom{a-1}{b-1} + \binom{a-1}{b}\right)$ that
\begin{align} \label{X(m,M)_identity}
X(m,M+1) = X(m,M) - X(m-1,M).
\end{align}
We now prove an auxiliary lemma.

\begin{lem} \label{X(M,M+1)_lemma}
With all notation as above, we have 
\[
X(M,M+1) = (-1)^{M+1}\left(1 - \frac{2^{2M+1}}{\binom{2M+1}{M+1}}\right).
\]
\end{lem}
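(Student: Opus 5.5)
The plan is to evaluate $X(M,M+1)$ by turning it into an integral, rather than by induction via (\ref{X(m,M)_identity}). By definition,
\[
X(M,M+1) = \sum_{\ell=0}^{M} \frac{(-1)^\ell \binom{M+1}{\ell}}{2M+1-2\ell}.
\]
Since $2M+1-2\ell \ge 1$ for $0 \le \ell \le M$, each denominator can be written as $\frac{1}{2M+1-2\ell} = \int_0^1 t^{2M-2\ell}\,{\rm d}t$. Exchanging the finite sum and the integral gives
\[
X(M,M+1) = \int_0^1 \sum_{\ell=0}^{M} (-1)^\ell \binom{M+1}{\ell} t^{2M-2\ell}\,{\rm d}t.
\]

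Next I would recognize the integrand as a truncated binomial expansion. The complete sum $\sum_{\ell=0}^{M+1}(-1)^\ell\binom{M+1}{\ell}t^{2(M+1-\ell)}$ equals $(t^2-1)^{M+1}$. Our sum is missing only the $\ell=M+1$ term, which is $(-1)^{M+1}$, and carries one fewer power of $t^2$. Hence the integrand is
\[
\frac{(t^2-1)^{M+1}-(-1)^{M+1}}{t^2} = (-1)^{M+1}\,\frac{(1-t^2)^{M+1}-1}{t^2},
\]
which is a polynomial, so nothing is singular at $t=0$. Therefore
\[
X(M,M+1) = (-1)^{M+1}\int_0^1 \frac{(1-t^2)^{M+1}-1}{t^2}\,{\rm d}t,
\]
and it remains to show that this integral equals $1 - 2^{2M+1}/\binom{2M+1}{M+1}$.

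For that I would integrate by parts with $u=(1-t^2)^{M+1}-1$ and ${\rm d}v=t^{-2}\,{\rm d}t$, so $v=-1/t$.

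\begin{itemize}
\item The boundary term $[-u/t]_0^1$ contributes $1$: at $t=1$ we have $u=-1$, and near $t=0$ we have $u=O(t^2)$, so $u/t\to 0$.
\item The remaining term is $\int_0^1 u'(t)/t\,{\rm d}t = -2(M+1)\int_0^1(1-t^2)^M\,{\rm d}t$.
\end{itemize}

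This last integral is the Wallis-type integral $\int_0^1(1-t^2)^M\,{\rm d}t = \frac{2^{2M}(M!)^2}{(2M+1)!}$. I would either cite this or prove it quickly via $t=\sin\theta$ and the standard reduction formula. Multiplying by $2(M+1)$ gives
\[
\frac{2^{2M+1}M!\,(M+1)!}{(2M+1)!} = \frac{2^{2M+1}}{\binom{2M+1}{M+1}},
\]
which yields the claimed identity.

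The only delicate point is the careful handling of the endpoint $t=0$ in the integration by parts, together with the Wallis evaluation. Neither should be a real obstacle. As a fallback, the identity can be checked for small $M$ and proved by induction using (\ref{X(m,M)_identity}), but the integral route seems cleanest.
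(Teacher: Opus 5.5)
Your proposal is correct and follows essentially the same route as the paper: the integral representation $\frac{1}{2M+1-2\ell}=\int_0^1 t^{2M-2\ell}\,{\rm d}t$, the truncated binomial expansion giving $(-1)^{M+1}\int_0^1 \frac{(1-t^2)^{M+1}-1}{t^2}\,{\rm d}t$, integration by parts, and the Wallis integral. Your explicit check that the boundary term vanishes at $t=0$ is a detail the paper leaves implicit.
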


\begin{proof}
We have 
\[
X(M,M+1) = \sum_{\ell = 0}^{M} \frac{(-1)^{\ell} \binom{M+1}{\ell}}{2M+1-2\ell}.
\]
Write
\[
\frac{1}{2M + 1 -2\ell} = \int_0^1 x^{2M - 2\ell} \,{\rm d}x,
\]
so that 
\[
\sum_{\ell = 0}^{M} \frac{(-1)^{\ell} \binom{M+1}{\ell}}{2M+1-2\ell} = \int_0^1 x^{2M} \sum_{\ell = 0}^{M} \binom{M+1}{\ell} \left(-x^{-2}\right)^{\ell} \, {\rm d}x.
\]
By the Binomial Theorem 
\[
\sum_{\ell = 0}^{M+1} \binom{M+1}{\ell} \left(-x^{-2}\right)^{\ell} = (1-x^{-2})^{M+1},
\]
hence
\[
 \sum_{\ell = 0}^{M} \binom{M+1}{\ell} \left(-x^{-2}\right)^{\ell}  = (1-x^{-2})^{M+1} - (-x^{-2})^{M+1}.
\]
Therefore,
\begin{align*}
\sum_{\ell = 0}^{M} \frac{(-1)^{\ell} \binom{M+1}{\ell}}{2M+1-2\ell}  &= \int_0^1 x^{2M} \left((1-x^{-2})^{M+1} - (-x^{-2})^{M+1} \right) \,{\rm d}x \\
&=(-1)^{M+1} \int_0^1 \frac{(1-x^2)^{M+1} -1}{x^2}\,{\rm d}x.
\end{align*}
Integration-by-parts gives
\[
(-1)^{M+1} \int_0^1 \frac{(1-x^2)^{M+1} -1}{x^2}\,{\rm d}x = (-1)^{M+1} \left(1-2(M+1)\int_0^1(1-x^2)^{M}\,{\rm d}x\right).
\]
The latter is a Wallis integral with value
\[
\int_0^1 (1-x^2)^{M}\,{\rm d}x = \frac{2^{2M}}{(2M+1) \binom{2M}{M}}.
\]
It follows that 
\begin{align*}
\sum_{\ell = 0}^{M} \frac{(-1)^{\ell} \binom{M+1}{\ell}}{2M+1-2\ell} &= (-1)^{M+1} \left(1-2(M+1)\frac{2^{2M}}{(2M+1) \binom{2M}{M}}\right)\\
&= (-1)^{M+1}\left(1 - \frac{2^{2M+1}}{\binom{2M+1}{M+1}}\right),
\end{align*}
as claimed.
\end{proof}

With this notation in place we may write $S(M)$ more succinctly as
\begin{align} \label{new_S(M)_definition}
S(M) = \frac{1}{2^{2M-1}} \binom{2M}{M} \sum_{m=0}^{M-1} C(m,M)X(m,M).
\end{align}
We finish this section by proving Theorem \ref{induction_theorem} by induction.

\begin{proof}[Proof of Theorem \ref{induction_theorem}]
We verify $S(1) = T(1) =1$.  Fix $M \geq 1$, assume $S(M) = T(M)$, and consider $S(M+1)$:
\begin{align*}
S(M+1) &= \frac{1}{2^{2M+1}} \binom{2M+2}{M+1}\sum_{m=0}^M C(m,M+1)X(m,M+1) \\
&=  \frac{1}{2^{2M-1}} \binom{2M}{M}  \left(\frac{2M+1}{2M+2}\right) \sum_{m=0}^M C(m,M+1)X(m,M+1).
\end{align*}
Applying Equation (\ref{CD_identity_1}) we  rewrite $S(M+1)$ as
\begin{align}\label{S(M+1)_rewritten}
S(M+1) = \frac{1}{2^{2M-1}} \binom{2M}{M} \sum_{m=0}^{M} D(m,M)X(m,M+1).
\end{align}
Next, write
\[
\sum_{m=0}^{M} D(m,M)X(m,M+1) =\sum_{m=0}^{M-1} D(m,M)X(m,M+1) + D(M,M)X(M,M+1).
\]
By direct substitution in Equation (\ref{D(m,M)_definition}), we have 
\begin{align} \label{D(M,M)}
D(M,M) = \frac{(-1)^M}{2M+2}.
\end{align}
Applying Lemma \ref{X(M,M+1)_lemma}, we then have 
\begin{align} \label{D(M,M)X(M,M+1)}
\frac{\binom{2M}{M}}{2^{2M-1}}D(M,M)X(M,M+1) = \frac{2}{2M+1} - \frac{\binom{2M}{M}}{2^{2M-1}(2M+2)}.
\end{align}
Now use Equation (\ref{X(m,M)_identity}) to write
\begin{align*}
\sum_{m=0}^{M-1} D(m,M)X(m,M+1) &= \sum_{m=0}^{M-1} D(m,M)\left(X(m,M) - X(m-1,M)\right) \\
&=\sum_{m=0}^{M-1} D(m,M)X(m,M) - \sum_{m=0}^{M-1} D(m,M)X(m-1,M).
\end{align*}
Since $X(-1,M) = 0$ for all $M$, we reindex the second sum as
\begin{align} \label{reindex}
\sum_{m=0}^{M-1} D(m,M)X(m-1,M) = \sum_{m=0}^{M-2} D(m+1,M)X(m,M).
\end{align}
Now we write
\begin{align} \label{breakup}
\sum_{m=0}^{M-1} D(m,M)X(m,M) = \sum_{m=0}^{M-2} D(m,M)X(m,M) + D(M-1,M)X(M-1,M).
\end{align}
Applying Equations (\ref{reindex}) and (\ref{breakup}) we have
\begin{align*}
\sum_{m=0}^{M-1} D(m,M)X(m,M+1) &= \sum_{m=0}^{M-2}\left(D(m,M) - D(m+1,M)\right)X(m,M) \\
& + D(M-1,M)X(M-1,M).
\end{align*}
By Equation (\ref{CD_identity_2}), we have 
\begin{align} \label{condense}
\sum_{m=0}^{M-2}\left(D(m,M) - D(m+1,M)\right)X(m,M) = \sum_{m=0}^{M-2}C(m,M)X(m,M),
\end{align}
and by direct calculation (using Lemma \ref{X(M,M+1)_lemma}) we verify that 
\begin{align} \label{simplification}
\frac{\binom{2M}{M}}{2^{2M-1}}D(M-1,M)X(M-1,M) &= \frac{2}{2M-1} - \frac{2}{2M+2} \\
&- \frac{3\binom{2M}{M}}{2^{2M-1}(2M-1)(2M+2)}. \nonumber
\end{align}
Multiplying Equation (\ref{condense}) by $\frac{\binom{2M}{M}}{2^{2M-1}}$ and adding the right-hand-sides of Equations (\ref{D(M,M)X(M,M+1)}) and  (\ref{simplification}), we have
\begin{align*}
S(M+1) &= \frac{\binom{2M}{M}}{2^{2M-1}} \sum_{m=0}^{M-2}C(m,M)X(m,M)  - \frac{\binom{2M}{M}}{2^{2M-1}(2M+2)}\left(1 + \frac{3}{2M-1} \right) \\&+\frac{2}{2M-1} - \frac{2}{2M+2} + \frac{2}{2M+1}.
\end{align*}
Since 
\begin{align*}
\frac{\binom{2M}{M}}{2^{2M-1}(2M+2)}\left(1 + \frac{3}{2M-1} \right) &= \frac{\binom{2M}{M}}{2^{2M-1}(2M-1)}\\
\end{align*}
and
\begin{align*}
-\frac{\binom{2M}{M}}{2^{2M-1}(2M-1)} + \frac{2}{2M-1} = \frac{1}{2M-1} \left(\frac{2^{2M}}{\binom{2M}{M}} -1 \right) = C(M-1,M)X(M-1,M) 
\end{align*}
(using Lemma \ref{X(M,M+1)_lemma}, again), we have
\begin{align*}
S(M+1) &= S(M)  - \frac{1}{M+1} + \frac{2}{2M+1}.
\end{align*}
Applying the induction hypothesis $S(M) = T(M)$, we have
\begin{align*}
S(M+1) &= 2 \sum_{k=1}^M \frac{1}{M+k} - \frac{1}{M+1} + \frac{2}{2M+1} \\
&=2\sum_{k=1}^{M+1} \frac{1}{M+1+k} = T(M+1),
\end{align*}
which completes the proof.
\end{proof}

\section{The Gibbs Phenomenon and Connection to Classical Families} \label{gibbs}

Classical orthogonal polynomials typically satisfy a derivative rule (inherited from the Sturm-Liouville differential equation) wherein the derivative of a degree $n$ member is related to degree $n-1$ member. For example, the Jacobi, Generalized Laguerre, and Hermite polynomials satisfy
\begin{align*}
\frac{{\rm d}}{{\rm d}x} P_{n}^{(\alpha,\beta)}(x) &= \frac{\alpha+\beta+n+1}{2}P_{n-1}^{(\alpha + 1,\beta+1)}(x), \\
\frac{{\rm d}}{{\rm d}x} L_n^{(\alpha)}(x) &= -L_{n-1}^{(\alpha+1)}(x), \text{ and} \\
\frac{{\rm d}}{{\rm d}x} H_n(x) &= 2nH_{n-1}(x), 
\end{align*}
respectively.  The Gegenbauer and Legendre polynomials, both of which are subfamilies of the Jacobi polynomials, satisfy similar relations.  It is precisely these derivative formulas that allow one to express $\mathcal{F}_N'(x)$ as a Christoffel-Darboux sum (see Section \ref{introduction} for the Hermite, \cite[Prop.~1]{kaber} for the Gegenbauer, and \cite[\S 4]{cs} for the Generalized Laguerre).  This allows one to express the critical points of $\mathcal{F}_N(x)$ as the zeroes of a single classical orthogonal polynomial, which have been extensively studied for their asymptotic properties. By contrast, the Krawtchouck polynomials satisfy the difference equation
\[
k_n(x+1;N) - k_n(x;N) = k_{n-1}(x+1/2;N-1).
\]

Since $k'_{n+1}(x;N)$ is not a constant multiple of $k_{n}(x;N)$, it is not possible to directly apply the Christoffel-Darboux formula to 
\[
\mathcal{F}_N'(x) = \frac{1}{2^{N-1}} \binom{N}{N/2} \sum_{n=0}^{N-2} \frac{k_n(0;N)k_{n+1}'(x;N)}{\|k_n(-;N)\|^2}.
\]

In conclusion we mention a deep relationship between the Krawtchouk and Hermite polynomials, known to Krawtchouk \cite{krav} and explained in \cite{mina}.  With all notation as in Section \ref{introduction}, it is shown in \cite[Equation (4)]{mina} that for fixed $n$, 
\[
\lim_{N \to \infty} \left(\frac{2}{Npq}\right)^{n/2} n!K_n^{(p)}(\sqrt{2Npq}x + Np;N) = H_n(x),
\]
the $n$th Hermite polynomial.  However, the approximations for fixed $N$ only hold for $n = O(N^{1/3})$ and $x = O(n^{1/2})$.  Therefore, it is not feasible to use the known Gibbs constant for the Hermite polynomials to deduce an exact value for the Krawtchouk.

\appendix

\section{Super Catalan Numbers} \label{super_appendix}

Recall from \cite[\S 6]{gessel} that the \emph{Super Catalan Numbers} $S(n,m)$ are defined as
\[
S(p,q) = \frac{(2p)!(2q)!}{p!q!(p+q)!}.
\]
We further define 
\[
T(p,q) = \frac{(2p+1)!(2q)!}{p!q!(p+q+1)!} = \frac{2p+1}{p + q +1} S(p,q).
\]
We use this appendix to prove a lemma involving the Super Catalan Numbers that would have detracted from the main flow of the proof of Proposition \ref{fourier_approx}. 

\begin{lem} \label{appendix_lemma}
Let $N$ be even and $n$ be odd with $n \leq N$.  Then
\[
\sum_{v=0}^n \sum_{s=N/2+1-v}^{N-v} (-1)^v\binom{n}{v}\binom{N-n}{s} = (-1)^{\frac{n+1}{2}}T\left( \frac{N-n-1}{2},\frac{n-1}{2}\right).
\]
\end{lem}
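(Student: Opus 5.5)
We first collapse the double sum into a single coefficient extraction. Set $a = N-n$, which is odd, and substitute $t = s+v$. The left-hand side becomes
\[
\sum_{t=N/2+1}^{N} \sum_{v} (-1)^v \binom{n}{v}\binom{a}{t-v} = \sum_{t > N/2} c_t, \qquad P(x) := (1-x)^n(1+x)^a = \sum_{t=0}^N c_t x^t .
\]
Here we use $\binom{a}{s}=0$ outside $0\le s\le a$, so the $v$-dependent limits on $s$ are harmless.

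Next we pass from the upper tail to a lower tail. Since $n$ is odd, $P(1)=0$, so $\sum_{t>N/2}c_t = -\sum_{t\le N/2} c_t$. A partial sum of coefficients is a coefficient of $P(x)/(1-x)$, so
\[
\sum_{t\le N/2} c_t = [x^{N/2}]\,(1-x)^{n-1}(1+x)^{a}.
\]
Write $q=(n-1)/2$ and $p=(a-1)/2=(N-n-1)/2$. Then $N/2=p+q+1$, and the claim reduces to the single identity
\[
[x^{p+q+1}]\,(1-x)^{2q}(1+x)^{2p+1} = (-1)^q\, T(p,q).
\]
Indeed, the overall sign is then $-(-1)^q=(-1)^{(n+1)/2}$, as required.

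To evaluate this middle-type coefficient, we use the standard trigonometric (Gessel-style) device. Put $x=e^{i\theta}$ and use
\[
1-e^{i\theta} = -2i\sin(\theta/2)\,e^{i\theta/2}, \qquad 1+e^{i\theta}=2\cos(\theta/2)\,e^{i\theta/2}.
\]
This gives
\[
(1-x)^{2q}(1+x)^{2p+1} = (-1)^q 2^{2p+2q+1}\sin^{2q}(\theta/2)\cos^{2p+1}(\theta/2)\, e^{i(p+q+1/2)\theta}.
\]
Extracting the coefficient of $x^{p+q+1}$ by $\frac{1}{2\pi}\int_0^{2\pi}(\cdot)\,e^{-i(p+q+1)\theta}\,{\rm d}\theta$ leaves the factor $e^{-i\theta/2}$. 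By the symmetry $\theta\mapsto 2\pi-\theta$, only $\cos(\theta/2)$ survives. The coefficient therefore equals
\[
(-1)^q\frac{2^{2p+2q+1}}{2\pi}\int_0^{2\pi}\sin^{2q}(\theta/2)\cos^{2p+2}(\theta/2)\,{\rm d}\theta,
\]
which is a Beta integral. Evaluating it with $B(q+\tfrac12,p+\tfrac32)$ and the duplication formula should give $(-1)^q\frac{(2q)!(2p+2)!}{2^{?}\,q!(p+1)!(p+q+1)!}$. This should simplify to $(-1)^q\frac{(2p+1)!(2q)!}{p!q!(p+q+1)!}=(-1)^qT(p,q)$.

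The main obstacle is this last bookkeeping step: getting the powers of $2$ and the half-integer Gamma values to cancel exactly to $T(p,q)$. I would sanity-check it on small cases such as $p=q=0$, where the left side is $[x^1](1+x)=1$ and $T(0,0)=1$. As a fallback, one could instead verify the binomial identity
\[
\sum_j(-1)^j\binom{2q}{j}\binom{2p+1}{p+q+1-j}=(-1)^qT(p,q)
\]
by checking that both sides satisfy the same first-order recurrence in $p$, for instance via Zeilberger's algorithm.
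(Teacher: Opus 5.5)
Your argument is correct, and the step you flagged does close. Substituting $u=\theta/2$ and using that both exponents are even,
\[
\int_0^{2\pi}\sin^{2q}(\theta/2)\cos^{2p+2}(\theta/2)\,{\rm d}\theta = 4\int_0^{\pi/2}\sin^{2q}u\,\cos^{2p+2}u\,{\rm d}u = 2B\bigl(q+\tfrac12,p+\tfrac32\bigr).
\]
The formula $\Gamma(k+\tfrac12)=\frac{(2k)!}{4^k k!}\sqrt{\pi}$ gives $B(q+\tfrac12,p+\tfrac32)=\frac{\pi\,(2q)!\,(2p+2)!}{4^{p+q+1}\,q!\,(p+1)!\,(p+q+1)!}$. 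So the coefficient is $(-1)^q\frac{(2q)!(2p+2)!}{2\,q!\,(p+1)!\,(p+q+1)!}$, meaning your unknown power of $2$ is $2^1$. Since $\frac{(2p+2)!}{2\,(p+1)!}=\frac{(2p+1)!}{p!}$, this is exactly $(-1)^qT(p,q)$, and the Zeilberger fallback is unnecessary.

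Compared with the paper, your first two steps reach the same intermediate quantity, namely $-\sum_v(-1)^v\binom{2q}{v}\binom{2p+1}{p+q+1-v}=-[x^{p+q+1}](1-x)^{2q}(1+x)^{2p+1}$. The paper gets there by treating the inner sum over $s$ as a function $f(v)$ and computing the iterated forward difference $(-1)^n\Delta^{(n)}f(0)$. One difference collapses the window of $s$ to two boundary binomials, one of which vanishes. That is the same algebra as your two observations: the upper tail of the coefficients of $P$ is minus the lower part because $P(1)=0$, and a partial sum of coefficients is a coefficient of $P/(1-x)$. Your generating-function phrasing is shorter and less prone to index slips.

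The genuine difference is the last step. The paper does not evaluate the central coefficient itself but quotes its value from the literature. You compute it directly via $x=e^{i\theta}$ and a Beta integral. Your route makes the lemma self-contained and explains why a quotient of factorials appears; the paper's is shorter on the page but rests on an external reference.
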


In order to prove Lemma \ref{appendix_lemma} we first prove the following identity for $T(p,q)$.

\begin{prop} \label{T(p,q)_identity}
With all notation as above, we have 
\[
(-1)^{q+1} T(p,q) = \sum_{v=0}^{2q+1} \sum_{s=p+q+2-v}^{2p+2q+2-v} (-1)^v \binom{2q+1}{v}\binom{2p+1}{s}.
\]
\end{prop}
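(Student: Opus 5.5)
The plan is to collapse the double sum to a single coefficient of an explicit polynomial, and then evaluate that coefficient by a classical hypergeometric summation.

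Write $n=2q+1$, $K=2p+1$ and $N=n+K=2p+2q+2$, so that $N/2=p+q+1$. For each $v\le n$ we have $N-v\ge K$, so the upper limit $s\le 2p+2q+2-v$ in the inner sum only adds terms with $\binom{2p+1}{s}=0$. The double sum is therefore
\[
\sum_{v+s\ge N/2+1}(-1)^v\binom{n}{v}\binom{K}{s}=\sum_{j>N/2}c_j,
\qquad\text{where}\quad (1-t)^n(1+t)^K=\sum_j c_jt^j .
\]
Setting $t=1$ gives $\sum_j c_j=0$, because $n\ge1$. Hence the tail sum equals $-\sum_{j\le N/2}c_j$.

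Next I would use that multiplying by $(1-t)^{-1}=\sum_i t^i$ turns coefficients into partial sums. This gives
\[
\sum_{j\le N/2}c_j=[t^{p+q+1}]\,(1-t)^{2q}(1+t)^{2p+1}.
\]
Expanding this coefficient and substituting $k=q+i$ turns it into
\[
(-1)^q\sum_i(-1)^i\binom{2q}{q+i}\binom{2p+1}{p+1-i}.
\]
The statement is thus reduced to the single-sum identity
\[
\sum_i(-1)^i\binom{2q}{q+i}\binom{2p+1}{p+1-i}=T(p,q)=\frac{(2p+1)!\,(2q)!}{p!\,q!\,(p+q+1)!},
\]
with the sign $(-1)^{q+1}$ accounted for as $-(-1)^q$.

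The main obstacle is this single-sum evaluation. It is an odd/even variant of von Szily's identity $\sum_k(-1)^k\binom{2a}{a+k}\binom{2b}{b+k}=S(a,b)$. My first attempt would be to write the sum as a well-poised ${}_2F_1(\,\cdot\,;-1)$ and apply Kummer's theorem; the parameters come out well-poised because the centers $q$ and $p+1/2$ of the two binomials are aligned. I would then simplify the resulting Gamma quotient to $T(p,q)$, using duplication-formula manipulations to clear the half-integer Gamma values.

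As a fallback I would prove the identity by induction on $p$. Pascal's rule $\binom{2p+1}{p+1-i}=\binom{2p}{p-i}+\binom{2p}{p+1-i}$ splits the sum into two von Szily-type sums with even binomials, which can be evaluated by Gessel's identity for $S$. It then remains to check the algebraic identity among super Catalan numbers $S(p,q)+(\text{shifted }S)=T(p,q)$. A Zeilberger/WZ certificate would be a third option. Finally, I would sanity-check the signs on a small case: $p=0$, $q=0$ should give $-T(0,0)=-1$.
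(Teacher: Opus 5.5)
Your reduction is correct, and it is the same reduction the paper makes. The paper takes the $(2q+1)$-st forward difference of $f(v)=\sum_{s=K-v}^{N-v}\binom{M}{s}$ to reach $-\sum_{v}(-1)^v\binom{2q}{v}\binom{2p+1}{p+q+1-v}$. Your tail-of-coefficients argument for $(1-t)^{2q+1}(1+t)^{2p+1}$ is the same computation in cleaner generating-function form. The paper then simply quotes the value $(-1)^qT(p,q)$ of this coefficient from the literature.

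The gap is in your own evaluation of $V=\sum_i(-1)^i\binom{2q}{q+i}\binom{2p+1}{p+1-i}$. The centers are \emph{not} aligned: $\binom{2q}{q+i}$ is symmetric about $i=0$, while $\binom{2p+1}{p+1-i}$ is symmetric about $i=1/2$. Accordingly, for $q\le p$,
\[ (-1)^qV=\binom{2p+1}{p-q}\,{}_2F_1(-2q,-p-q-1;p-q+1;-1), \]
where $c=a-b$ rather than $c=1+a-b$. This series is only contiguous to a well-poised one, so Kummer's theorem does not evaluate it.

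Your fallback has the same defect. After $\binom{2p+1}{p+1-i}=\binom{2p}{p-i}+\binom{2p}{p+1-i}$, the first piece is von Szily's $S(p,q)$. The second piece, $\sum_i(-1)^i\binom{2q}{q+i}\binom{2p}{p+1-i}$, has centers offset by one, so it is not a von Szily sum and Gessel's identity says nothing about it. Splitting again for an induction on $p$ only produces further offset sums. The WZ option would work in principle, but it is only named.

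The missing idea is to symmetrize \emph{before} applying Pascal's rule, and to apply it upward rather than downward. The factor $(-1)^i\binom{2q}{q+i}$ is invariant under $i\mapsto -i$, and $\binom{2p+1}{p+1+i}=\binom{2p+1}{p-i}$. Hence
\[ 2V=\sum_i(-1)^i\binom{2q}{q+i}\left[\binom{2p+1}{p+1-i}+\binom{2p+1}{p-i}\right]=\sum_i(-1)^i\binom{2q}{q+i}\binom{2p+2}{p+1-i}=S(p+1,q) \]
by von Szily's identity. Moreover $S(p+1,q)=\frac{2p+2}{p+1}\,T(p,q)=2T(p,q)$, so $V=T(p,q)$.

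Equivalently, the offset sum in your fallback equals $\frac{1}{2}S(p+1,q)-S(p,q)=\frac{p-q}{p+q+1}S(p,q)$, which is the ``shifted $S$'' you were after. With this line added, your argument becomes a complete, self-contained proof of the evaluation that the paper only cites.
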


\begin{proof}
Consider the sum 
\begin{align} \label{app_sum_1}
 \sum_{v=0}^{2q+1} \sum_{s=p+q+2-v}^{2p+2q+2-v} (-1)^v \binom{2q+1}{v}\binom{2p+1}{s}
\end{align}
and set $n=2q+1$, $M = 2p+1$, $K=p+q+2$, and $N= 2p+2q+2$.  Then the sum in (\ref{app_sum_1}) can be written as
\begin{align} \label{app_sum_2}
\sum_{v=0}^n (-1)^v \binom{n}{v} \sum_{s=K-v}^{N-v} \binom{M}{s}.
\end{align}
Let $f(v) = \sum_{s=K-v}^{N-v} \binom{M}{s}$ so that 
\begin{align} \label{app_sum_3}
\sum_{v=0}^n (-1)^v \binom{n}{v} \sum_{s=K-v}^{N-v} \binom{M}{s} = \sum_{v=0}^n (-1)^v \binom{n}{v} f(v).
\end{align}

Recall that if $h: \Z \to \Z$ is a function, then the \emph{forward difference} $\Delta f:\Z \to \Z$ is defined as
\[
\Delta h(z) = h(z+1) - h(z).
\]
Iterating (using the linearity of  $\Delta$), it is straightforward to check that
\[
\Delta^{(n)} f(0) = \sum_{v=0}^n (-1)^{n-v} \binom{n}{v}f(v),
\]
whence
\begin{align} \label{app_sum_4}
\sum_{v=0}^n (-1)^v \binom{n}{v} f(v) = (-1)^n \Delta^{(n)} f(0).
\end{align}
We compute directly that
\[
\Delta f(v) = \binom{M}{k-v-1} - \binom{M}{N-v}.  
\]
Then 
\begin{align}
(-1)^n \Delta^{(n)} f(0) &= (-1)^n \Delta^{(n-1)} \Delta f(0)  \nonumber \\
&=  (-1)^n \sum_{v=0}^{n-1-v} \binom{n-1}{v} \Delta f(v) \nonumber \\
&= (-1)^n \sum_{v=0}^{n-1} (-1)^{n-1-v} \binom{n-1}{v} \left[  \binom{M}{K-v-1} - \binom{M}{N-v}\right]. \nonumber 
\end{align}
Substitute back in for $n$, $M$, $K$, and $N$:
\begin{align} 
(-1)^{2q+1} \sum_{v=0}^{2q} (-1)^{2q-v}\binom{2q}{v} \left[ \binom{2p+1}{p+q+1-v} - \binom{2p+1}{2p + 2q + 2 - v} \right]. \nonumber
\end{align}
Simplifying, and noting that $\binom{2p+1}{2p + 2q + 2 - v}  = 0$ in the range $v \in [0,2q]$, allows us to rewrite the sum in (\ref{app_sum_1}) as 
\begin{align} \label{app_sum_5}
- \sum_{v=0}^{2q} (-1)^v \binom{2q}{v} \binom{2p+1}{p+q+1-v}.
\end{align}
We have reduced the problem to proving that
\[
(-1)^qT(p,q) = \sum_{v=0}^{2q} (-1)^v \binom{2q}{v} \binom{2p+1}{p+q+1-v}.
\]
By the work of  \cite[\S 3]{gs}, the sum on the right is interpreted as the coefficient of $x^{p+q+1}$ in the product $(1-x)^{2q}(1+x)^{2p+1}$, which is shown in \emph{loc.~cit.}~to be 
\[
(-1)^q\frac{\binom{2p+1}{p}\binom{2q}{q}}{\binom{p+q+1}{q}}, 
\]
which is precisely $(-1)^q T(p,q)$.
\end{proof}

It is now a simple matter to complete the proof of Lemma \ref{appendix_lemma}. 

\begin{proof}[Proof of Lemma \ref{appendix_lemma}]
Set $p=(N-n-1)/2$ and $q=(n-1)/2$ in Proposition \ref{T(p,q)_identity}.
\end{proof}

\section{Declarations}

\subsection{Funding} The second-named author was partially supported by the Bard College AIMS$^+$ grant.

\subsection{Competing Interests} The authors have no competing interests to declare that are relevant to the content of this article.

\end{document}